\newtheorem{theo}{Theorem}
\newtheorem{prop}{Proposition}
\newcommand{\eps}{\varepsilon}
\newcommand{\R}{\mathbb{R}}
\newcommand{\msE}{\mathscr{E}}
\newcommand{\msG}{\mathscr{G}}
\newcommand{\msN}{\mathscr{N}}
\newcommand{\msV}{\mathscr{V}}
\begin{document}

\begin{frontmatter}

\title{Complex network model for COVID-19: human behavior, pseudo-periodic solutions and multiple epidemic waves}

\author[1]{Cristiana J. Silva\corref{cor1}}
\ead{cjoaosilva@ua.pt}
\ead[url]{https://orcid.org/0000-0002-7238-546X}

\author[2]{Guillaume Cantin}
\ead{guillaumecantin@mail.com}
\ead[url]{https://orcid.org/0000-0001-5122-1194}

\author[1]{Carla Cruz}
\ead{carla.cruz@ua.pt}
\ead[url]{https://orcid.org/0000-0003-4082-7523}

\author[3]{Rui Fonseca-Pinto}
\ead{rui.pinto@ipleiria.pt}
\ead[url]{https://orcid.org/0000-0001-6774-5363}

\author[3,4]{Rui Passadouro da Fonseca}
\ead{rmfonseca@arscentro.min-saude.pt}
\ead[url]{https://orcid.org/0000-0002-7766-576X}

\author[4]{Estev\~{a}o Soares dos Santos}
\ead{essantos3@arscentro.min-saude.pt}
\ead[url]{https://orcid.org/0000-0001-6567-1487}

\author[1]{Delfim F. M. Torres}
\ead{delfim@ua.pt}
\ead[url]{https://orcid.org/0000-0001-8641-2505}


\cortext[cor1]{Corresponding author}

\address[1]{Center for Research and Development in Mathematics and Applications (CIDMA),\\
Department of Mathematics, University of Aveiro, 3810-193 Aveiro, Portugal}

\address[2]{Laboratoire de Math\'ematiques Appliqu\'ees, FR-CNRS-3335,\\ 
25, Rue Philippe Lebon, Le Havre Normandie 76063, France}

\address[3]{Center for Innovative Care and Health Technology (ciTechCare),
Polytechnic of Leiria, Portugal}

\address[4]{ACES Pinhal Litoral -- Central Regional Health Administration (ARS Centro), Portugal}


\begin{abstract}
We propose a mathematical model for the transmission dynamics of SARS-CoV-2 
in a homogeneously mixing non constant population, and generalize it to a model 
where the parameters are given by piecewise constant functions. This allows us 
to model the human behavior and the impact of public health policies on the dynamics 
of the curve of active infected individuals during a COVID-19 epidemic outbreak. 
After proving the existence and global asymptotic stability of the disease-free 
and endemic equilibrium points of the model with constant parameters, 
we consider a family of Cauchy problems, with piecewise constant parameters, 
and prove the existence of pseudo-oscillations between a neighborhood 
of the disease-free equilibrium and a neighborhood of the endemic equilibrium, 
in a biologically feasible region. In the context of the COVID-19 pandemic, 
this pseudo-periodic solutions are related to the emergence of epidemic waves. 
Then, to capture the impact of mobility in the dynamics of COVID-19 epidemics, 
we propose a complex network with six distinct regions based on COVID-19 real data 
from Portugal. We perform numerical simulations for the complex network model, 
where the objective is to determine a topology that minimizes the level of active 
infected individuals and the existence of topologies that are likely 
to worsen the level of infection. We claim that this methodology is a tool 
with enormous potential in the current pandemic context, and can be applied 
in the management of outbreaks (in regional terms) but also to manage 
the opening/closing of borders.
\end{abstract}



\begin{keyword}
COVID-19 epidemic waves 
\sep piecewise constant parameters 
\sep pseudo-periodic solutions 
\sep complex network 
\sep Portugal case study

\MSC[2020] 92D30 (Primary) \sep 05C82 \sep 34C60 \sep 34D20 (Secondary)

\end{keyword}

\end{frontmatter}


\section{Introduction}

The history of the pandemics that have already plagued mankind has revealed that there are consistently periods 
of marked increase in infected individuals, followed by phases in which the numbers are relatively lower. 
In these cases, there is also a repetition of these oscillations that are called pandemic waves.
A second wave of the pandemic poses an imminent threat to society, with an immense cost in terms of human lives 
and a devastating economic impact, it is therefore, crucial to try to avoid the emergence of pandemic waves. 

The coronavirus disease 2019 (COVID-19) is an infectious disease caused by the severe acute respiratory 
syndrome coronavirus-2 (SARS-CoV-2). The World Health Organization announced the COVID-19 outbreak 
as a pandemic on 11 March 2020 \cite{who:covid}. At the time of writing, several countries across 
Europe are seeing a resurgence in COVID-19 cases after successfully controlling the first outbreak 
and started to take action to face the so-called second wave \cite{second:wave:europe}. Mathematical 
tools have been important in the analysis and control of COVID-19 pandemic, see e.g. 
\cite{Steven:covid,Kankan,Ndairou:covid:wuhan,Moradian:JTMed} and references cited therein. 

In this paper, we propose a mathematical model given by a system of ordinary equations 
based on the model from \cite{Silvaetal:covidPT}, for the transmission dynamics of SARS-CoV-2 
in a homogeneously mixing population, and we generalize it to a non constant population model 
with piecewise constant parameters. Considering parameters determined by piecewise constant 
functions allows to model the governmental and public health decisions of political actors,
which have a large influence on the behaviors of individuals, which in turn
can change the dynamics of the epidemic, see e.g. 
\cite{ambrosio2020coupled,banos2015importance,epstein2008coupled}.
Moreover, the piecewise constant parameters also allows to mathematical 
model the human behavior in the application of non-pharmaceutical interventions (NPI), 
such as, physical distancing, limited size of indoor and outdoor gatherings, teleworking, 
regular cleaning of frequently-touched surfaces and appropriate ventilation of indoor spaces, 
mask use, avoiding close contact and hand washing, see e.g. 
\cite{CDC:preventive:measures,ECDC:guildelines}.  

Viruses need cells to replicate themselves, as they are not able to do so on their own. 
Depending on the type of virus, contagion is more or less facilitated depending on the means 
of transmission. In the case of SARS-COV-2, the transmission is carried out through droplets 
containing viruses that are sent from the respiratory tract. Accordingly, as the airways 
are the gateway to the virus, the proximity of people and the dynamics imposed by the movement 
are a determining factor in the spread of the COVID-19 disease. The proof that the movement 
of people is a determining factor for contagion is that during the recent periods of confinement 
the number of infections has decreased very markedly \cite{Vinceti}. 

In this paper, the COVID-19 mathematical model is improved by constructing a complex 
network of dynamical systems, following the framework presented in 
\cite{cantin2017nonidentical,cantin2019influence},
in order to take into account the mobilities of individuals,
which are also known to play a decisive role in the dynamics of the epidemic.
We investigate in which cases such a COVID-19 second wave can occur,
by establishing sufficient conditions for pseudo-periodic solutions: see, e.g.,
\cite{denu2020existence,wang2010travelling,hsu2019stability} for traveling waves 
results in epidemiological models and 
\cite{aziz2006synchronization,belykh2004connection,golubitsky2006nonlinear} 
for complex network models.  

Our model is calibrated in order to fit with the real data of the COVID-19 dynamics 
in six regions of Portugal mainland, namely \emph{Norte}, \emph{Centro}, 
\emph{Lisboa e Vale do Tejo}, \emph{Alentejo}, \emph{Algarve} 
and \emph{Pinhal Litoral}. 

Numerical simulations are provided where we explore the effect of the topology 
of the network on the dynamics of the epidemics (disposal of connections 
and coupling strength). We identify which type of topology minimizes the level 
of infection of the epidemic, and which type of topology worsens the number 
of infected individuals.

This paper is organized as follows. In Section~\ref{sec:model:constparam}, 
we propose a mathematical model for the transmission dynamics of SARS-COV-2, 
with constant parameters and variable population size. We show that the model 
admits two equilibrium points and analyze their local and global stability. 
In Section~\ref{sec:model:piecewise}, we consider piecewise constant parameters, 
which allows to model the impact of public health policies and the human behavior 
in the dynamics of the COVID-19 epidemic. The existence and uniqueness of global 
solutions of the model with piecewise constant parameters is proved. An important 
result in the context of the COVID-19 pandemic and the resurgence of epidemic waves, 
is proved in Section~\ref{sec:exist:pseudoperiodic}, where a sufficient condition 
is proved to the existence of pseudo-periodic solutions. 
In Section~\ref{sec:complexnet:model}, we construct a complex network of $SAIRP$ 
models with piecewise constant parameters, where each node represent one of six 
regions in Portugal, and where the values of the parametes differ from one region 
to another.  The case study, with COVID-19 real data from Portugal, is analyzed 
in Section~\ref{sec:casestudyPT}, where we calibrate the model to each of the six 
regions and after, numerical simulations are performed in order to determine 
the topologies that minimizes the average number of active infected individuals 
in Portugal. The results obtained in this paper are discussed, 
in Section~\ref{sec:discussion}, from a practical point of view and its implications 
in the management of the COVID-19 pandemic waves. We end the paper 
in Section~\ref{sec:conclusion}, with some conclusions and future work.   


\section{Model with constant parameters}
\label{sec:model:constparam}

We propose a compartmental $SAIRP$ mathematical model, based on \cite{Silvaetal:covidPT}, 
where the population is subdivided into five classes: susceptible individuals ($S$); 
asymptomatic infected individuals ($A$); active infected individuals ($I$); removed 
(including recovered and COVID-19 induced deaths) ($R$); protected individuals ($P$). 
The total population, $N(t) = S(t) + A(t) + I(t) + R(t) + P(t)$, with $t \in [0, T]$ 
representing the time (in days) and $T > 0$, has a variable size where the recruitment 
rate, $\Lambda$, and the natural death rate, $\mu > 0$, are assumed to be constant. 
The susceptible individuals $S$ become infected by contact with active infected $I$ 
and asymptomatic infected $A$ individuals, at a rate of infection 
$\beta \frac{\left( \theta A  + I \right)}{N}$, where $\theta$ represents a modification 
parameter for the infectiousness of the asymptomatic infected individuals $A$.
The remaining assumptions follows the one from \cite{Silvaetal:covidPT}: 
only a fraction $q$ of asymptomatic infected individuals $A$ develop symptoms 
and are detected, at a rate $v$. Active infected individuals $I$ are transferred 
to the recovered/removed individuals $R$, at a rate $\delta$, by recovery from the 
disease or by COVID-19 induced death. A fraction $p$, with $0 < p < 1$, is protected 
(not immune) from infection, by the application of non-pharmaceutical interventions (NPI), 
such as, physical distancing, limited size of indoor and outdoor gatherings, teleworking, 
regular cleaning of frequently-touched surfaces and appropriate ventilation of indoor 
spaces, mask use and hand washing, see e.g. \cite{CDC:preventive:measures,ECDC:guildelines}, 
that prevent from being exposed to the infection, and is transferred to the class of protected 
individuals $P$, at a rate $\phi$. A fraction $m$ of protected individuals $P$ returns 
to the susceptible class $S$, at a rate $w$. In what follows, for the sake of simplification, 
we use the notation $\nu = v q$ and $\omega = w m$. 
The previous assumptions are described by the following system:
\begin{equation}
\label{eq:model}
\begin{cases}
\dot{S}(t) = \Lambda - \beta (1-p)\frac{ \left( \theta A(t) 
+ I(t) \right)}{N(t)} S(t)  - \phi p S(t) + \omega P(t) -\mu S(t) ,\\[0.2 cm]
\dot{A}(t) = \beta (1-p) \frac{\left( \theta A(t) + I(t) \right)}{N(t)} S(t) 
- \nu  A(t) - \mu A(t), \\[0.2 cm]
\dot{I}(t) = \nu A(t) - \delta  I(t) - \mu I(t) ,\\[0.2 cm]
\dot{R}(t) = \delta I(t) - \mu R(t), \\[0.2 cm]
\dot{P}(t) = \phi p S(t) - \omega P(t) - \mu P(t) . 
\end{cases}
\end{equation}


\subsection{Existence, positivity and boundedness of solutions}

The equations of the SAIRP model \eqref{eq:model} 
can be rewritten as 
\begin{equation}
\label{eq:model-short-form}
\dot{x}(t) = f\left(x(t),\,\alpha \right),
\quad t > 0,
\end{equation}
with $x = (S,\,A,\,I,\,R,\,P)^T \in \R^5$ and 
$\alpha = (\Lambda,\,\mu,\,\beta,\,p,\,\theta,\,\phi,\,\omega,\,\nu,\,\delta)^T \in \R^9$,
where the non-linear operator $f$ is defined in $\R^5 \times \R^9$ by
\begin{equation}
f(x,\,\alpha) =
\begin{pmatrix}
\Lambda - \beta (1-p)\frac{ \left( \theta A + I\right)}{N} - \phi p S + \omega P -\mu S\\[0.2 cm]
\beta (1-p) \frac{\left( \theta A + I \right)}{N} S - \nu  A - \mu A\\[0.2 cm]
\nu A - \delta  I - \mu I\\[0.2 cm]
\delta I - \mu R \\[0.2 cm]
\phi p S - \omega P - \mu P
\end{pmatrix}.
\end{equation}

In order to prove that the problem determined by \eqref{eq:model-short-form} 
is well-posed, we introduce the compact region $\Omega \subset \R^5$ defined by
\begin{equation}
\label{eq:Omega}
\Omega = \left\lbrace x = (S,\,A,\,I,\,R,\,P)^T \in \big(\R^+\big)^5~;~
0 < S+A+I+R+P \leq \frac{\Lambda}{\mu} \right\rbrace.
\end{equation}

The following theorem establishes the existence of global 
solutions to \eqref{eq:model-short-form}.

\begin{theo}
\label{theo:well-posedness-model}
For any $x_0 = (S_0,\,A_0,\,I_0,\,R_0,\,P_0)^T \in \Omega$,
the Cauchy problem given by \eqref{eq:model-short-form} and $x(0) = x_0$ 
admits a unique solution, denoted by $x(t,\,x_0)$, defined on $[0,\,\infty)$,
whose components are non-negative. Furthermore, the region $\Omega$ defined 
by \eqref{eq:Omega} is positively invariant.
\end{theo}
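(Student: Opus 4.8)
The plan is to establish well-posedness in the standard way for epidemic compartmental models. First I would observe that the vector field $f$ is defined and smooth (in particular locally Lipschitz) on an open neighborhood of $\Omega$ in $\R^5$: the only possible singularity comes from the term $N$ appearing in denominators, but on $\Omega$ we have $N = S+A+I+R+P > 0$, so $f$ is $C^1$ there. By the Picard--Lindel\"of theorem, for any $x_0 \in \Omega$ there is a unique maximal solution $x(t,x_0)$ defined on a maximal interval $[0, T_{\max})$ with $T_{\max} \in (0,\infty]$.

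Next I would prove positivity of the components. The key remark is that each coordinate hyperplane is, in a suitable sense, ``inward-pointing'' or at worst invariant under the flow. Concretely, for each variable I would check the sign of the corresponding component of $f$ when that variable vanishes and the others are non-negative: if $S = 0$ then $\dot S = \Lambda + \omega P \geq \Lambda > 0$; if $A = 0$ then $\dot A = \beta(1-p)\frac{I}{N} S \geq 0$; if $I = 0$ then $\dot I = \nu A \geq 0$; if $R = 0$ then $\dot R = \delta I \geq 0$; and if $P = 0$ then $\dot P = \phi p S \geq 0$. This shows that the non-negative orthant $(\R^+)^5$ is positively invariant, so all components of $x(t,x_0)$ stay non-negative on $[0,T_{\max})$; a clean way to phrase this rigorously is to integrate each linear-in-that-variable equation using an integrating factor, e.g. writing $A(t) = A_0 e^{-\int_0^t(\nu+\mu)}$ plus a non-negative integral term, and similarly for the others, or to invoke a standard comparison/barrier argument.

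Then I would bound $N(t)$. Summing the five equations of \eqref{eq:model} and using that the infection, $\phi p$, $\omega$ and $\nu$ terms cancel in pairs, I get $\dot N(t) = \Lambda - \mu N(t)$ (with the $\delta$ terms also cancelling between the $I$ and $R$ equations), hence by Gr\"onwall / direct integration $N(t) = \frac{\Lambda}{\mu} + \big(N_0 - \frac{\Lambda}{\mu}\big) e^{-\mu t}$. Since $0 < N_0 \le \frac{\Lambda}{\mu}$ and, by positivity of the components, $N(t) > 0$ for all $t$ (indeed $N(t) \ge N_0 e^{-\mu t} > 0$ already from $\dot N \ge -\mu N$), it follows that $0 < N(t) \le \frac{\Lambda}{\mu}$ on $[0,T_{\max})$. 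Thus $x(t,x_0) \in \Omega$ for all $t \in [0,T_{\max})$, which proves that $\Omega$ is positively invariant.

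Finally, global existence: since the solution remains in the compact set $\Omega$ on which $f$ is bounded, the trajectory cannot blow up or leave the domain of definition of $f$ in finite time, so by the standard continuation theorem for ODEs we must have $T_{\max} = \infty$. I do not expect any real obstacle here; the only point requiring a little care is making the positivity argument rigorous at the boundary (ensuring the solution does not instantaneously cross into negative values), which is why I would prefer the integrating-factor representation over a bare tangency argument, and checking that the cancellations in $\dot N$ are exactly as claimed given the notation $\nu = vq$, $\omega = wm$.
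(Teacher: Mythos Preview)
Your proposal is correct and follows essentially the same approach as the paper: local existence via Picard--Lindel\"of, non-negativity via the quasi-positivity property $f_i\vert_{x_i=0}\ge 0$ (the paper cites Proposition~A.17 in \cite{smith2011dynamical} for this, where you propose the equivalent integrating-factor argument), and the bound on $N$ by summing the equations and applying Gronwall. Your version is in fact slightly sharper and more explicit than the paper's: you correctly observe that $\dot N = \Lambda - \mu N$ holds with equality (the paper writes only $\dot N + \mu N \le \Lambda$), and you spell out the continuation argument for $T_{\max}=\infty$, which the paper leaves implicit.
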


\begin{proof}
The existence of a local in time solution $x(t,\,x_0)$ to problem \eqref{eq:model-short-form} 
starting from $x_0 \in \Omega$ follows from the theory of ordinary equations 
(see, for instance, \cite{perko2013differential}).
The non-negativity of the components is guaranteed by the quasi-positivity 
of the non-linear operator $f = (f_j)_{1 \leq j \leq 5}$,
which means that it satisfies the property
\begin{equation*}
f_i(x_1,\,\dots,\,x_{i-1},\,0,\,x_{i+1},\,\dots,\,x_{5},\,\alpha) \geq 0,
\end{equation*}
for all $x = (x_1,\,\dots,\,x_{5}) \in (\R^+)^{5}$, 
$i \in \lbrace 1,\,\dots,\,5 \rbrace$ and $\alpha\in \R^9$.
By virtue of Proposition A.17 in \cite{smith2011dynamical}, it follows 
that the components of any solution $x(t,\,x_0)$
stemming from $x_0$ in $\Omega$
remain non-negative in future time.
Finally, summing the five equations of system \eqref{eq:model-short-form} 
leads to
\begin{equation*}
\dot{N}(t) + \mu N(t) \leq \Lambda,
\quad
t>0,	
\end{equation*}
from which it is deduced, using Gronwall's lemma, that
\begin{equation*}
N(t) \leq \left[N(0) - \frac{\Lambda}{\mu} \right]e^{-\mu t} + \frac{\Lambda}{\mu},
\quad
t>0.
\end{equation*}
The latter inequality proves that the region $\Omega$ defined 
by \eqref{eq:Omega} is positively invariant.
\end{proof}


\subsection{Equilibrium points and basic reproduction number}

In this section, we study the equilibrium points of system \eqref{eq:model}
and derive their expressions with respect to the parameters of the model. 
We compute the basic reproduction number, denoted by $R_0$, 
following the approach in \cite{van2002reproduction}.  

\begin{prop}
The model \eqref{eq:model} has two equilibrium points: 
\begin{itemize}
\item disease-free equilibrium, denoted by $\Sigma_0$, given by
\begin{equation}
\label{eq:DFE}
\Sigma_0 = \left( S_0, A_0, I_0, R_0, P_0 \right) = 
\left( {\frac {\Lambda\, \left( \omega+\mu \right) }{\mu\, 
\left( p\phi+\mu+\omega \right) }}, 0, 0, 0, 
{\frac {\phi\,p\Lambda}{\mu\, \left( p\phi+
\mu+\omega \right) }} \right) \, ;
\end{equation}
\item endemic equilibrium, $\Sigma_+$, whenever $R_0 > 1$, given by
\begin{equation}
\Sigma_+ = \left( S_+, A_+, I_+, R_+, P_+ \right) 
\end{equation}
with
\begin{equation}
\label{eq:EE}
\begin{split}
S_+ &= \frac{ \Lambda (\omega+\mu)}{(p \phi+\mu+\omega) \mu} R_0^{-1} \, , \\
A_+ &= \frac{\Lambda }{ \nu+\mu}R_0^{-1} (R_0-1) \, ,\\
I_+ &=\frac{\Lambda \nu }{(\nu+\mu) (\delta+\mu)}R_0^{-1} (R_0-1) \, ,\\
R_+ &=\frac{\delta \Lambda \nu  }{ (\nu+\mu) (\delta+\mu) \mu}R_0^{-1} (R_0-1) \, ,\\
P_+ &= \frac{ \Lambda \phi p}{(p \phi+\mu+\omega)  \mu}R_0^{-1} \, ,
\end{split}
\end{equation}
\end{itemize}
where the basic reproduction number, $R_0$, is given by 
\begin{equation}
R_0 = {\frac { \beta\, \left( 1-p \right)
\left( \delta\,\theta+\mu\,\theta+\nu \right) 
\left( \omega +\mu \right)  }{ \left( \delta+\mu \right) 
\left( \nu+\mu \right)  \left( p\phi+\mu+\omega \right) }}
= \frac{\mathcal{N}}{\mathcal{D}} \, .
\end{equation}
\end{prop}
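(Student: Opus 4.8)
The plan is to compute the equilibria directly by solving $f(x,\alpha)=0$, exploiting the fact that the total population obeys a closed equation. First I would add the five scalar equations of $f(x,\alpha)=0$: every coupling term cancels and one is left with $\Lambda-\mu N=0$, so that $N=\Lambda/\mu$ at \emph{every} equilibrium. This turns the nonlinear incidence $\beta(1-p)(\theta A+I)S/N$ into the bilinear expression $\mu\beta(1-p)(\theta A+I)S/\Lambda$, and it also shows that the sum of the five equilibrium equations is redundant with the remaining four. Next, the third, fourth and fifth equations of \eqref{eq:model} are linear in $I,R,P$ and yield
\[
I=\frac{\nu}{\delta+\mu}\,A,\qquad R=\frac{\delta}{\mu}\,I=\frac{\delta\nu}{\mu(\delta+\mu)}\,A,\qquad P=\frac{\phi p}{\omega+\mu}\,S,
\]
so that every coordinate of an equilibrium is determined by $S$ and $A$ alone.

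I would then split into two cases. If $A=0$, then $I=R=0$, the second equation holds trivially, and the constraint $S+P=\Lambda/\mu$ together with $P=\phi p S/(\omega+\mu)$ gives the single value $S_0=\Lambda(\omega+\mu)/[\mu(p\phi+\mu+\omega)]$; this is exactly $\Sigma_0$ in \eqref{eq:DFE}. If $A\neq 0$, dividing the second equation by $A$ and inserting $I=\nu A/(\delta+\mu)$ and $N=\Lambda/\mu$ reduces it to a linear equation for $S$ alone, whose solution is $S_+=\dfrac{\Lambda}{\mu}\cdot\dfrac{(\nu+\mu)(\delta+\mu)}{\beta(1-p)\bigl(\theta(\delta+\mu)+\nu\bigr)}$; once $R_0$ has been identified one checks that this equals $S_0R_0^{-1}$, i.e. the expression in \eqref{eq:EE}. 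Substituting $S_+$ and the linear relations into the population identity $S+A+I+R+P=\Lambda/\mu$, using the factorization $\mu(\delta+\mu)+\mu\nu+\delta\nu=(\mu+\nu)(\mu+\delta)$ to collapse the coefficient of $A$ to $(\nu+\mu)/\mu$, together with the cancellation $S_+(p\phi+\mu+\omega)/(\omega+\mu)=\tfrac{\Lambda}{\mu}R_0^{-1}$, the identity becomes $\tfrac{\Lambda}{\mu}=\tfrac{\Lambda}{\mu}R_0^{-1}+\tfrac{\nu+\mu}{\mu}A$, whence $A_+=\tfrac{\Lambda}{\nu+\mu}R_0^{-1}(R_0-1)$; then $I_+,R_+,P_+$ follow from the linear relations. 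All coordinates of $\Sigma_+$ are positive precisely when $R_0>1$ (the only sign-constrained coordinates are $A_+,I_+,R_+$, each proportional to $R_0-1$), so $\Sigma_+$ lies in the feasible region $\Omega$ exactly in that case; for $R_0\leq 1$ it is not biologically meaningful, which explains the qualifier in the statement.

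It remains to justify the displayed formula for $R_0$, for which I would use the next-generation matrix method of \cite{van2002reproduction}. Taking $(A,I)$ as the infected compartments, I write their equations as $\mathcal{F}-\mathcal{V}$ with $\mathcal{F}=\bigl(\beta(1-p)(\theta A+I)S/N,\ 0\bigr)^T$ collecting the new-infection terms and $\mathcal{V}=\bigl((\nu+\mu)A,\ -\nu A+(\delta+\mu)I\bigr)^T$ the remaining transfers, and linearize at $\Sigma_0$, where $S_0/N_0=(\omega+\mu)/(p\phi+\mu+\omega)$ since $N_0=\Lambda/\mu$. The matrix $F$ is rank one and $V$ is lower triangular; inverting $V$ and using that the only nonzero eigenvalue of the rank-one matrix $FV^{-1}$ equals its trace gives
\[
R_0=\rho(FV^{-1})=\frac{\beta(1-p)\bigl(\delta\theta+\mu\theta+\nu\bigr)(\omega+\mu)}{(\delta+\mu)(\nu+\mu)(p\phi+\mu+\omega)},
\]
which is the announced expression.

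I expect the only real obstacle to be bookkeeping rather than analysis: one must keep the several parameter groupings straight and carry out the factorizations — notably $\mu(\delta+\mu)+\mu\nu+\delta\nu=(\mu+\nu)(\mu+\delta)$ and the cancellations that turn $S_+$ and $P_+$ into the compact form of \eqref{eq:EE} — and verify that the two cases $A=0$, $A\neq 0$ are genuinely exhaustive and each produces a unique solution, so that \eqref{eq:model} has exactly the two listed equilibria.
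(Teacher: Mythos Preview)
Your proposal is correct and follows precisely the ``standard arguments'' to which the paper's one-line proof defers (citing \cite{van2002reproduction}): direct algebraic solution of $f(x,\alpha)=0$ together with the next-generation matrix computation of $R_0$. Your write-up simply makes explicit the bookkeeping the paper omits, and all the intermediate identities you record (in particular $N=\Lambda/\mu$ at equilibrium, $S_0/N_0=(\omega+\mu)/(p\phi+\mu+\omega)$, and the factorization $\mu(\delta+\mu)+\mu\nu+\delta\nu=(\mu+\nu)(\mu+\delta)$) check out.
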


\begin{proof}
The computation of the equilibrium points and the basic reproduction 
number follows standard arguments, see \cite{van2002reproduction}. 
\end{proof}


\subsection{Local and global stability analysis}

In what follows we prove the local and global asymptotic stability of the 
disease-free equilibrium (DFE), when the basic reproduction number satisfies 
the inequality $R_0 < 1$. For $R_0 > 1$, we prove that the endemic equilibrium 
is globally asymptotically stable in a biologically meaningful 
compact positive invariant region. 

\begin{theo}[local stability of the DFE]
The disease-free equilibrium, $\Sigma_0$, is locally asymptotically 
stable whenever $R_0 < 1$. 
\end{theo}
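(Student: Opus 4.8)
The plan is to linearize system \eqref{eq:model} at the disease-free equilibrium $\Sigma_0$ and to show that the Jacobian $J(\Sigma_0)$ is a Hurwitz matrix exactly when $R_0 < 1$. The key simplification is that the nonlinear incidence term $\beta(1-p)(\theta A + I)S/N$ and all of its partial derivatives with respect to $S$, $R$ and $P$ vanish at $\Sigma_0$ because there $A_0 = I_0 = 0$; moreover $S_0 + P_0 = \Lambda/\mu$, so the population at $\Sigma_0$ is $N_0 := \Lambda/\mu$. Consequently, after reordering the variables as $(A,I\mid S,P\mid R)$, the matrix $J(\Sigma_0)$ becomes block lower-triangular, and its spectrum is the union of the spectra of the diagonal blocks.

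Carrying this out, the $(A,I)$-subsystem decouples in the linearization and contributes the $2\times 2$ block
\[
J_{AI} = \begin{pmatrix} \theta c - \nu - \mu & c \\ \nu & -\delta - \mu \end{pmatrix},
\qquad c := \beta(1-p)\,\frac{S_0}{N_0} = \beta(1-p)\,\frac{\omega+\mu}{p\phi+\mu+\omega},
\]
the $R$-equation contributes the single eigenvalue $-\mu$, and the $(S,P)$-part contributes the spectrum of
\[
J_{SP} = \begin{pmatrix} -\phi p - \mu & \omega \\ \phi p & -\omega - \mu \end{pmatrix}.
\]
For $J_{SP}$ one computes directly $\operatorname{tr} J_{SP} = -(\phi p + \omega + 2\mu) < 0$ and $\det J_{SP} = \mu(p\phi+\mu+\omega) > 0$, so both of its eigenvalues have negative real part, unconditionally.

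The heart of the argument is the block $J_{AI}$, where the threshold $R_0$ enters. Using the elementary identity $c\,(\delta\theta + \mu\theta + \nu) = R_0\,(\delta+\mu)(\nu+\mu)$, which is immediate from the stated expression for $R_0$, one obtains
\[
\det J_{AI} = (\delta+\mu)(\nu+\mu) - c\,(\delta\theta + \mu\theta + \nu) = (\delta+\mu)(\nu+\mu)\,(1 - R_0),
\]
which is strictly positive precisely when $R_0 < 1$. For the trace, from $c\,(\delta\theta+\mu\theta+\nu) < (\delta+\mu)(\nu+\mu)$ one deduces $\theta c < (\delta+\mu) + (\nu+\mu)$, hence $\operatorname{tr} J_{AI} = \theta c - \nu - \delta - 2\mu < 0$. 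A $2\times2$ matrix with negative trace and positive determinant has both eigenvalues in the open left half-plane (Routh--Hurwitz in degree two), so all eigenvalues of $J(\Sigma_0)$ have negative real part and $\Sigma_0$ is locally asymptotically stable. The only mild obstacle here is careful bookkeeping: one must use the infection-free identities $A_0 = I_0 = 0$ and $N_0 = \Lambda/\mu$ consistently when differentiating the incidence term, so that the block-triangular form — and with it the clean factorization of $\det J_{AI}$ through $R_0$ — genuinely holds.

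Finally, since $R_0$ has already been derived via the next-generation matrix method of \cite{van2002reproduction}, there is a shorter route: the model \eqref{eq:model} satisfies the structural hypotheses of that framework, with infected compartments $A$ and $I$ and with the matrices $F$, $V$ producing $R_0 = \rho(FV^{-1})$, so Theorem~2 of \cite{van2002reproduction} directly yields local asymptotic stability of $\Sigma_0$ for $R_0 < 1$ (and instability for $R_0 > 1$). Either presentation completes the proof.
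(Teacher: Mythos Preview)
Your proof is correct and follows the same overall strategy as the paper: linearize at $\Sigma_0$, use $A_0=I_0=0$ to see that the Jacobian is block lower-triangular in the ordering $(A,I\mid S,P\mid R)$, and apply the degree-two Routh--Hurwitz criterion to the $(A,I)$-block. The one substantive difference is in the trace argument. The paper shows $B=-\operatorname{tr}J_{AI}>0$ by invoking the extra biological hypothesis $0<\delta,\mu,\nu<1$ so as to obtain $(\delta+\mu)(\nu+\mu)<\delta+2\mu+\nu$, after first claiming $\theta<\delta\theta+\mu\theta+\nu$ ``since all parameters take positive values''; that intermediate inequality, however, does not follow from positivity alone and actually fails for the paper's own case-study values $\theta=1$, $\delta=1/27$, $\mu\approx 0$, $\nu=0.15$. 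Your route avoids both the extra hypothesis and that step: from $R_0<1$ one has $c\theta(\delta+\mu)\le c(\delta\theta+\mu\theta+\nu)<(\delta+\mu)(\nu+\mu)$, hence $c\theta<\nu+\mu$, and so $\operatorname{tr}J_{AI}=c\theta-(\nu+\mu)-(\delta+\mu)<0$ with no size restriction on the rates. Your alternative appeal to Theorem~2 of \cite{van2002reproduction} is also valid and is the most economical argument, given that $R_0$ was already obtained by the next-generation method.
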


\begin{proof}
The Jacobian matrix of system \eqref{eq:model}, evaluated at the disease-free equilibrium 
\eqref{eq:DFE}, is given by
\begin{equation*}
M\left(\Sigma_0\right) = 
\left[ 
\begin {array}{ccccc} 
-(\phi\,p+\mu) & -\frac{\theta\,\beta\,\left( \mu+\omega \right) \left(1-p \right) }{\phi\,p+\mu+\omega} 
& -\frac{\beta\, \left( \mu+\omega \right)  \left(1-p \right) }{\phi	\,p+\mu+\omega} & 0 & \omega\\ 
\noalign{\medskip} 0 & -\frac{\beta\theta(1-p)(\mu+\omega)+(\mu+\nu)(p\phi+\mu+\omega)}{\phi\,p+\mu+\omega} 
& \frac {\beta\, \left( \mu+\omega \right)  \left(1-p \right) }{\phi\,p+\mu+\omega}&0&0\\
\noalign{\medskip}0&\nu&-\delta-\mu&0&0\\ 
\noalign{\medskip}0&0&\delta&-\mu&0\\ 
\noalign{\medskip}\phi\,p&0&0&0&-(\mu+\omega)
\end{array} \right]  \, .
\end{equation*}
The eigenvalues of the matrix $M\left(\Sigma_0\right)$ are given by 
$\lambda_1 = \lambda_2 = -\mu$, $\lambda_3 = -(\phi p + \mu +\omega)$ 
and the remaining two, $\lambda_4$ and $\lambda_5$, are the roots 
of the polynomial $p(\lambda)$ given by
\begin{equation*}
p(\lambda) = \lambda^2 + B \lambda + C ,
\end{equation*}
where
$B = \frac{-\beta \theta (1-p) (\omega+\mu)}{(p \phi+\mu+\omega)} +\delta+2 \mu+\nu$ and 
$C = \frac{\mathcal{D-\mathcal{N}}}{p \phi+\mu+\omega}$. Applying the Routh--Hurwitz criterion, 
we conclude that model \eqref{eq:model} is locally stable if, and only if, $A > 0$ and $B > 0$.
It is easy to show that $C > 0$ whenever $R_0 < 1$. The coefficient $B$ is positive when
$$
\beta \theta (1-p) (\omega+\mu)   < (\delta+2 \mu+\nu)(p \phi+\mu+\omega) \, .
$$ 
Since all parameters take positive values, we have
\begin{equation*}
\beta \theta (1-p) (\omega+\mu) < \underbrace{\beta\, \left( 1-p \right)
\left( \delta\,\theta+\mu\,\theta+\nu \right) 
\left( \omega +\mu \right)}_{\mathcal{N}} \, .
\end{equation*}
From $R_0 < 1$, we have ${\mathcal{N}} < {\mathcal{D}}$ and, therefore,
\begin{equation*}
\underbrace{\beta\, \left( 1-p \right)
\left( \delta\,\theta+\mu\,\theta+\nu \right) 
\left( \omega +\mu \right)}_{\mathcal{N}} < \underbrace{\left( \delta+\mu \right) 
\left( \nu+\mu \right)  \left( p\phi+\mu+\omega \right)}_\mathcal{D} \, .
\end{equation*}
From the biological meaning of $\delta$, $\mu$ and $\nu$, we have that $0 < \delta, \mu, \nu < 1$ 
(observe that their unit is $day \, ^{-1}$). Therefore, 
\begin{equation*}
0 < \delta + \mu < 2 \quad \text{and} \quad 0 < \mu + \nu < 2
\end{equation*}
and
\begin{equation*}
\frac{1}{\delta + \mu} > \frac{1}{2} \quad \text{and} 
\quad \frac{1}{\mu + \nu}  > \frac{1}{2},
\end{equation*}
that is, 
\begin{equation*}
\frac{1}{\delta + \mu} + \frac{1}{\mu + \nu}  > 1 \, .
\end{equation*}
Multiplying the previous inequality by $(\delta + \mu) (\mu + \nu)$ we have 
\begin{equation*}
\left( \frac{1}{\delta + \mu} + \frac{1}{\mu + \nu} \right) (\delta + \mu) (\mu + \nu) 
> (\delta + \mu) (\mu + \nu) \, .
\end{equation*}
In other words, 
\begin{equation*}
(\delta + \mu) (\mu + \nu) < \delta + 2 \mu + \nu  \, .
\end{equation*}
Therefore, 
\begin{equation*} 
\left( \delta+\mu \right) \left( \nu+\mu \right)  \left( p\phi+\mu+\omega \right) 
< \left( \delta + 2 \mu + \nu \right)  \left( p\phi+\mu+\omega \right),
\end{equation*}
which ensures that $B > 0$. It follows the local asymptotic stability 
of the disease-free equilibrium $\Sigma_0$.  
\end{proof}

\begin{theo}[global stability  of the DFE]
\label{theo:glob:DFE}
If $R_0 < 1$, then the disease-free equilibrium, $\Sigma_0$, 
is globally asymptotically stable in $\Omega$. 
\end{theo}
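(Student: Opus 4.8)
The strategy is a Lyapunov argument on the infected compartments, combined with a monotone comparison estimate that tames the frequency-dependent incidence; since local asymptotic stability of $\Sigma_0$ under $R_0<1$ is already available, it is enough to show that $\Sigma_0$ attracts every trajectory starting in $\Omega$. First I would record two consequences of Theorem~\ref{theo:well-posedness-model}: summing the equations of \eqref{eq:model} yields $\dot N = \Lambda - \mu N$, so $N(t)\to \Lambda/\mu =: N_0$ and $N(t)\ge N(0)>0$ for all $t\ge 0$; and, dropping the non-negative incidence term in the $S$-equation, $(S,P)$ is a sub-solution of the cooperative linear system $\dot{\widetilde S} = \Lambda - (\phi p+\mu)\widetilde S + \omega\widetilde P$, $\dot{\widetilde P} = \phi p\,\widetilde S - (\omega+\mu)\widetilde P$, whose matrix is Hurwitz with equilibrium $(S_0,P_0)$. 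By the comparison theorem for cooperative systems (see \cite{smith2011dynamical}), $S(t)\le \widetilde S(t)$ and $P(t)\le \widetilde P(t)$ for all $t$, hence $\limsup_{t\to\infty} S(t)\le S_0$ and, since $N(t)\to N_0$, also $\limsup_{t\to\infty} S(t)/N(t)\le S_0/N_0$.

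Because $R_0<1$, I can fix $\eps>0$ with $R_0\,\bigl(1+\eps N_0/S_0\bigr)<1$ and then $T_\eps>0$ such that $S(t)/N(t)\le S_0/N_0+\eps$ for all $t\ge T_\eps$. On $[T_\eps,\infty)$ I would use the linear functional
\[
V(t) = A(t) + \frac{\beta(1-p)\,\bigl(S_0/N_0+\eps\bigr)}{\delta+\mu}\,I(t),
\]
whose second coefficient is chosen exactly so that the contribution of $I$ to $\dot V$ cancels once the bound on $S/N$ is inserted in $\dot A$. A direct computation, using the definition of $R_0$ together with $S_0/N_0 = (\omega+\mu)/(p\phi+\mu+\omega)$, then gives, for $t\ge T_\eps$,
\[
\dot V(t) \le (\nu+\mu)\Bigl[R_0\bigl(1+\eps N_0/S_0\bigr) - 1\Bigr]\,A(t) = -\kappa\,A(t),\qquad \kappa>0.
\]
Hence $V$ is non-increasing and bounded below, so $\int_{T_\eps}^{\infty} A(t)\,dt<\infty$; since $A$ is bounded on $\Omega$ with bounded derivative it is uniformly continuous, and Barbalat's lemma yields $A(t)\to 0$.

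From $A(t)\to 0$ the other convergences follow in cascade: $\dot I = \nu A - (\delta+\mu)I$ forces $I(t)\to 0$, then $\dot R = \delta I - \mu R$ gives $R(t)\to 0$, and $N(t)\to N_0$ then implies $S(t)+P(t)\to N_0 = S_0+P_0$. Substituting $S = N_0 - P + o(1)$ into the $P$-equation produces the scalar asymptotically autonomous equation $\dot P = \phi p\,(N_0-P) - (\omega+\mu)P + o(1)$, whose limit equation has the globally attracting equilibrium $P_0$; by the theory of asymptotically autonomous systems $P(t)\to P_0$, hence $S(t)\to S_0$. Thus every solution issued from $\Omega$ converges to $\Sigma_0$, which together with local asymptotic stability gives the claim.

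The crux — essentially the only step beyond routine computation — is the treatment of the frequency-dependent incidence. On $\Omega$ one only has the crude bound $S/N\le 1$, which, carried through the same Lyapunov computation, would merely give global stability under the stronger hypothesis $R_0\le (\omega+\mu)/(p\phi+\mu+\omega)$. Reaching the sharp threshold $R_0<1$ is precisely what the monotone comparison for the $(S,P)$ block provides, through the refined estimate $\limsup_{t\to\infty} S(t)/N(t)\le S_0/N_0$. The remaining ingredients — positivity and boundedness of solutions on $[0,\infty)$, $N$ bounded away from $0$, uniform continuity of $A$ — are all immediate on $\Omega$.
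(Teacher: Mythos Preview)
Your argument is correct and takes a genuinely different route from the paper's. The paper constructs a single Volterra-type Lyapunov functional
\[
L = S - S_0 - S_0\ln\tfrac{S}{S_0} + A + \zeta I + \xi\Bigl(P - P_0 - P_0\ln\tfrac{P}{P_0}\Bigr) + \chi\Bigl(N - N_0 - N_0\ln\tfrac{N}{N_0}\Bigr),
\]
with carefully tuned constants $\zeta,\xi,\chi$; the $(S,P)$ cross-terms collapse via $x+1/x\ge 2$, while the $A$- and $I$-coefficients are annihilated through a Young-inequality step that replaces $S_0/N$ by $S_0/N_0$, and LaSalle finishes the proof. You instead decouple the problem: the monotone comparison on the cooperative $(S,P)$ block gives $\limsup S/N\le S_0/N_0$, which is exactly the ingredient needed to run a linear Lyapunov functional on $(A,I)$ under the sharp threshold $R_0<1$; Barbalat and an asymptotically autonomous limit then handle the remaining compartments. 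Your approach is more modular and each step is easier to check --- in particular it bypasses the Young-inequality manoeuvre and the somewhat opaque choice of $\chi$ --- at the cost of invoking several external tools (cooperative comparison, Barbalat, Thieme-type limit results) and relying on the already established local stability. The paper's approach is self-contained within Lyapunov theory but the functional is harder to discover.
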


\begin{proof}
Since $R_0 < 1$, we can write $R_0 = 1-\eta$ with $0 < \eta < 1$. 
We obtain that
\[
\frac{\beta\, \left( 1-p \right)\left( \delta\,\theta
+\mu\,\theta+\nu \right)\left( \omega +\mu \right)}
{\left( \delta+\mu \right) \left( \nu+\mu \right)  \left( p\phi+\mu+\omega \right)}
= 1-\eta,
\]
which leads to
\begin{equation}
\label{eq:k(1-eta)}
\frac{\beta\, \left( 1-p \right)\left( \delta\,\theta
+\mu\,\theta+\nu \right)}{\left( \delta+\mu \right) \left( \nu+\mu \right)}
= k(1-\eta),
\end{equation}
where $k$ is defined by
\[
k = \dfrac{p\phi+\mu+\omega}{\mu+\omega}.
\]
\emph{A fortiori}, we have:
\begin{equation}
\label{eq:zeta>0}
\dfrac{\beta(1-p)\theta}{\nu+\mu} < k(1-\eta).
\end{equation}
Now we consider the following functional given by
\[
L = S-S_0 - S_0 \ln\frac{S}{S_0}
+ A
+ \zeta I
+ \xi \left(P-P_0 - P_0 \ln\frac{P}{P_0}\right)
+ \chi \left(N-N_0 - N_0 \ln\frac{N}{N_0}\right),
\]
where $\zeta$ and $\xi$ are defined by
\begin{equation}
\label{eq:zeta-xi}
\zeta = \dfrac{k(1-\eta)(\nu+\mu)-\beta(1-p)\theta}{k\nu(1-\eta)},
\quad \xi = \dfrac{\omega P_0}{\phi pS_0},
\end{equation}
and $\chi$ is a positive constant which will be determined below.
Recall that $N_0 = \frac{\Lambda}{\mu}$,
and note that $\zeta > 0$ by virtue of \eqref{eq:zeta>0}.
As constructed, $L$ is a non-negative functional and we have
\[
L = 0 \iff (S, A, I, R, P) = \Sigma_0.
\]
Now we compute the derivative of $L$ along the solutions of the SAIRP 
model \eqref{eq:model} starting in $\Omega$. We have
\[
\begin{split}
\dot{L}
&= \left(1-\dfrac{S_0}{S}\right)\dot{S}
+ \dot{A}
+ \zeta \dot{I}
+ \xi \left(1-\dfrac{P_0}{P}\right)\dot{P}
+ \chi \left(1-\dfrac{N_0}{N}\right)\dot{N}\\
&= \left(1-\dfrac{S_0}{S}\right)\left[\Lambda 
- \beta(1-p)\dfrac{\theta A + I}{N}S - (p\phi+\mu)S + \omega P \right]\\
&+ \left[\beta(1-p)\dfrac{\theta A + I}{N}S 
- (\nu+\mu)A\right]	+ \zeta\left[\nu A - (\delta+\mu)I\right] \\
&+ \xi \left(1-\dfrac{P_0}{P}\right)\left[\phi p S - (\omega+\mu)P\right] 
+ \chi \left(1-\dfrac{N_0}{N}\right)(\Lambda - \mu N).
\end{split}
\]
Now we use the relations
\[
\Lambda = (p\phi+\mu)S_0 - \omega P_0,
\quad p\phi S_0 = (\omega+\mu)P_0
\]
to obtain
\[
\begin{split}
\dot{L} &= \left(1-\dfrac{S_0}{S}\right)\left[- \beta(1-p)
\dfrac{\theta A + I}{N}S - (p\phi+\mu)(S-S_0) + \omega (P-P_0) \right] \\
&+ \left[\beta(1-p)\dfrac{\theta A + I}{N}S - (\nu+\mu)A\right]	
+ \zeta\left[\nu A - (\delta+\mu)I\right] \\
&+ \xi \left(1-\dfrac{P_0}{P}\right)\left[\phi p (S-S_0) 
- (\omega+\mu)(P-P_0)\right] + \chi \left(1-\dfrac{N_0}{N}\right)(\Lambda - \mu N).
\end{split}
\]
\emph{First step.} Let us examine the terms involving $(S-S_0)$ and $(P-P_0)$. 
We have
\[
\begin{split}
\Theta_1 =
&-(p\phi+\mu)\left(1-\dfrac{S_0}{S}\right)(S-S_0) + \omega \left(1-\dfrac{S_0}{S}\right)(P-P_0)	\\
&+\xi p\phi\left(1-\dfrac{P_0}{P}\right)(S-S_0) - \xi(\omega+\mu)\left(1-\dfrac{P_0}{P}\right)(P-P_0)\\
=	&-(p\phi+\mu)S_0\left(1-\dfrac{S_0}{S}\right)\left(\dfrac{S}{S_0}-1\right) 
+ \omega P_0\left(1-\dfrac{S_0}{S}\right)\left(\dfrac{P}{P_0}-1\right)\\
&+\xi p\phi S_0\left(1-\dfrac{P_0}{P}\right)\left(\dfrac{S}{S_0}-1\right) 
- \xi(\omega+\mu)P_0\left(1-\dfrac{P_0}{P}\right)\left(\dfrac{P}{P_0}-1\right)\\
=	&-(p\phi+\mu)S_0\left(\dfrac{S_0}{S}+\dfrac{S}{S_0}-2\right)
+ \omega P_0\left[\left(\dfrac{P}{P_0}+\dfrac{S_0}{S}-2\right)
+\left(1-\dfrac{P}{P_0}\times\dfrac{S_0}{S}\right)\right] \\
&+\xi p\phi S_0\left[\left(\dfrac{S}{S_0}+\dfrac{P_0}{P}-2\right)
+\left(1-\dfrac{S}{S_0}\times\dfrac{P_0}{P}\right)\right]	
- \xi(\omega+\mu)P_0\left(\dfrac{P_0}{P}+\dfrac{P}{P_0}-2\right).
\end{split}
\]
Now we have chosen $\xi$ such that $\omega P_0 = \xi p\phi S_0$.
The latter equality becomes
\[
\begin{split}
\Theta_1 =
&-(p\phi + \mu)S_0\left(\dfrac{S_0}{S}+\dfrac{S}{S_0}-2\right)
+\omega P_0 \left(\dfrac{P}{P_0}+\dfrac{S_0}{S}-2\right)
+\omega P_0 \left(1-\dfrac{P}{P_0}\times\dfrac{S_0}{S}\right) \\
&+\omega P_0 \left(1-\dfrac{S}{S_0}\times\dfrac{P_0}{P}\right)
+\omega P_0 \left(\dfrac{S}{S_0}+\dfrac{P_0}{P}-2\right)
-\xi(\omega+\mu)P_0\left(\dfrac{P_0}{P}+\dfrac{P}{P_0}-2\right)	\\
=&\left[-(p\phi + \mu)S_0 + \omega P_0\right] \left(\dfrac{S_0}{S}+\dfrac{S}{S_0}-2\right)	
+\left[\omega P_0 - \xi(\omega+\mu)P_0\right] \left(\dfrac{P_0}{P}+\dfrac{P}{P_0}-2\right)	\\
&+ \omega P_0 \left(2 - \dfrac{S_0 S^{-1}}{P_0 P^{-1}} - \dfrac{P_0 P^{-1}}{S_0 S^{-1}} \right).
\end{split}
\]
Elementary computations show that
\[
-(p\phi + \mu)S_0 + \omega P_0 < 0,
\quad \omega P_0 - \xi(\omega+\mu)P_0 = 0.
\]
Now we use the standard inequality
\[
2 - x - \dfrac{1}{x} \leq 0,
\quad \forall x > 0,
\]
to conclude that $\Theta_1 \leq 0$.
Thus, we obtain
\[
\begin{split}
\dot{L}
&\leq \left(1-\dfrac{S_0}{S}\right)\left[- \beta(1-p)\dfrac{\theta A + I}{N}S \right]
+ \left[\beta(1-p)\dfrac{\theta A + I}{N}S - (\nu+\mu)A\right] \\
&+ \zeta\left[\nu A - (\delta+\mu)I\right] + \chi \left(1-\dfrac{N_0}{N}\right)(\Lambda - \mu N).
\end{split}
\]
\emph{Second step.}
We examine the terms involving $A$ and $I$. After simplifications, we have
\[
\begin{split}
\dot{L}
&\leq
\beta (1-p) \theta A \left[\dfrac{S_0}{N} - \dfrac{(\nu+\mu)-\zeta \nu}{\beta (1-p) \theta} \right]
+\beta (1-p) I \left[\dfrac{S_0}{N} - \dfrac{\zeta (\delta+\mu)}{\beta (1-p)} \right]
- \chi \mu \dfrac{(N-N_0)^2}{N}	\\
&\leq \beta (1-p) \theta A \left[\dfrac{S_0}{N_0} 
- \dfrac{(\nu+\mu)-\zeta \nu}{\beta (1-p) \theta} \right]
+\beta (1-p) I \left[\dfrac{S_0}{N_0} - \dfrac{\zeta (\delta+\mu)}{\beta (1-p)} \right]	\\
&+\beta (1-p)\theta A \left(\dfrac{S_0}{N} - \dfrac{S_0}{N_0}\right) 
- \dfrac{\chi}{2} \mu \dfrac{(N-N_0)^2}{N} \\
&+\beta (1-p) I \left(\dfrac{S_0}{N} - \dfrac{S_0}{N_0}\right) 
- \dfrac{\chi}{2} \mu \dfrac{(N-N_0)^2}{N}.
\end{split}
\]
Next, we write
\[
\Theta_2 = \beta (1-p)\theta A \left(\dfrac{S_0}{N} - \dfrac{S_0}{N_0}\right) 
- \dfrac{\chi}{2} \mu \dfrac{(N-N_0)^2}{N} 
= \dfrac{\beta (1-p)\theta S_0}{N N_0} A \left(N_0-N\right) 
- \dfrac{\chi}{2} \mu \dfrac{(N-N_0)^2}{N}.
\]
By virtue of Young's inequality, we have
\[
A ( N_0 - N ) \leq \dfrac{\eps A^2}{2} + \dfrac{(N_0-N)^2}{2\eps},
\]
where $\eps$ is a positive coefficient which can be chosen arbitrarily small.
It follows that
\[
\begin{split}
\Theta_2 &\leq \dfrac{\beta (1-p)\theta S_0}{N N_0} \left[
\dfrac{\eps A^2}{2} + \dfrac{(N_0-N)^2}{2\eps}\right]
- \dfrac{\chi}{2} \mu \dfrac{(N-N_0)^2}{N}\\
&\leq \beta (1-p)\theta \dfrac{A}{N}\times A \times \dfrac{\eps S_0}{2 N_0}
+ \left[\dfrac{\beta (1-p)\theta S_0}{2 N_0 \eps} 
- \dfrac{\chi}{2}\mu \right] \dfrac{(N-N_0)^2}{N}.
\end{split}
\]
Since $A \leq N$, we obtain
\[
\Theta_2 \leq \beta (1-p)\theta \times A \times \dfrac{\eps S_0}{2 N_0}
+ \left[\dfrac{\beta (1-p)\theta S_0}{2 N_0 \eps} - \dfrac{\chi}{2}\mu \right] 
\dfrac{(N-N_0)^2}{N}.
\]
Similar computations show that
\[
\beta(1-p) I \left(\dfrac{S_0}{N} - \dfrac{S_0}{N_0}\right) 
- \dfrac{\chi}{2} \mu \dfrac{(N-N_0)^2}{N}
\leq \beta(1-p) \times I \times \dfrac{\eps S_0}{2 N_0}
+ \left[\dfrac{\beta(1-p) S_0}{2 N_0 \eps} 
- \dfrac{\chi}{2}\mu \right] \dfrac{(N-N_0)^2}{N}.
\]
We obtain that
\[
\begin{split}
\dot{L} &\leq \beta (1-p) \theta A \left[\dfrac{S_0}{N_0}\left(1+\dfrac{\eps}{2}\right) 
- \dfrac{(\nu+\mu)-\zeta \nu}{\beta (1-p) \theta} \right]
+ \beta (1-p) I \left[\dfrac{S_0}{N_0}\left(1+\dfrac{\eps}{2}\right) 
- \dfrac{\zeta (\delta+\mu)}{\beta (1-p)} \right] \\
&+ \left[\dfrac{\beta (1-p)\theta S_0}{2 N_0 \eps} 
- \dfrac{\chi}{2}\mu \right] \dfrac{(N-N_0)^2}{N}
+ \left[\dfrac{\beta (1-p) S_0}{2 N_0 \eps} 
- \dfrac{\chi}{2}\mu \right] \dfrac{(N-N_0)^2}{N}.
\end{split}
\]
Now we choose
\[
\eps = \dfrac{2\eta}{1-\eta} = \dfrac{2(1-R_0)}{R_0} > 0,
\]
which yields $1+\frac{\eps}{2} = \frac{1}{1-\eta}$.
Since $\frac{S_0}{N_0} = \frac{1}{k}$, we obtain, 
by virtue of \eqref{eq:zeta-xi},
\[
\dfrac{S_0}{N_0}\left(1+\dfrac{\eps}{2}\right) 
- \dfrac{(\nu+\mu)-\zeta \nu}{\beta (1-p) \theta}
= \dfrac{1}{k(1-\eta)} - \dfrac{(\nu+\mu)-\zeta \nu}{\beta (1-p) \theta} = 0.
\]
Analogously, it is seen that
\[
\begin{split}
\dfrac{S_0}{N_0}\left(1+\dfrac{\eps}{2}\right) - \dfrac{\zeta (\delta+\mu)}{\beta (1-p)}
&= \dfrac{1}{k(1-\eta)} - \dfrac{\zeta (\delta+\mu)}{\beta (1-p)}\\
&= \dfrac{\beta (1-p) (\theta \delta + \theta \mu + \nu) 
- k(1-\eta)(\nu+\mu)(\delta+\mu)}{\beta (1-p) \nu k(1-\eta)} = 0,	
\end{split}												
\]
by virtue of \eqref{eq:k(1-eta)}. After those simplifications, we obtain
\[
\begin{split}
\dot{L} &\leq \left[\dfrac{\beta (1-p)\theta S_0}{2 N_0 \eps} 
- \dfrac{\chi}{2}\mu \right] \dfrac{(N-N_0)^2}{N}
+ \left[\dfrac{\beta(1-p) S_0}{2 N_0 \eps} 
- \dfrac{\chi}{2}\mu \right] \dfrac{(N-N_0)^2}{N} \\
&\leq \left[\dfrac{\beta(1-p) S_0}{N_0 \eps} 
- \chi\mu \right] \dfrac{(N-N_0)^2}{N},
\end{split}
\]
since $\theta \leq 1$. Finally, we choose $\chi > 0$ sufficiently large so that
\[
\dfrac{\beta (1-p) S_0}{N_0 \eps} - \chi\mu < 0,
\]
which guarantees that $\dot{L} \leq 0$. In other words, the functional $L$ 
is a Lyapunov function for the flow induced 
by the SAIRP model \eqref{eq:model}. The conclusion follows from LaSalle's 
invariance principle \cite{lasalle1960some}.
\end{proof}

The next theorem establishes the global stability of the endemic equilibrium (EE)
$\Sigma_+$ defined by \eqref{eq:EE}, in an invariant region $\Gamma \subset \Omega$.
It remains an open question to determine if the endemic equilibrium 
is globally asymptotically stable in the whole region $\Omega$.

\begin{theo}[global stability of the EE]
\label{theo:glob:EE}
The compact region $\Gamma$ defined by
\begin{equation}
\label{eq:Gamma-invariant-region}
\Gamma = \left\lbrace x = (S,\,A,\,I,\,R,\,P)^T \in \big(\R^+\big)^5~;  
S+A+I+R+P = \frac{\Lambda}{\mu} \right\rbrace
\end{equation}
is positively invariant under the flow induced by system \eqref{eq:model}.
It contains the disease-free equilibrium, 
$\Sigma_0$, and the endemic equilibrium, $\Sigma_+$, if $R_0 > 1$.
Furthermore, if $R_0 > 1$, then the endemic equilibrium $\Sigma_+$ 
is globally asymptotically stable in $\Gamma$.
\end{theo}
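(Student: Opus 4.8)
The plan is to first establish the invariance of $\Gamma$, then to reduce the dynamics on $\Gamma$ to a subsystem with constant total population, and finally to construct a Goh--Volterra-type Lyapunov function and conclude with LaSalle's invariance principle. First I would note that summing the five equations of \eqref{eq:model} gives $\dot N = \Lambda - \mu N$, so $N(t) \equiv \Lambda/\mu$ as soon as $N(0) = \Lambda/\mu$; combined with the non-negativity of the components from Theorem~\ref{theo:well-posedness-model}, this shows that $\Gamma$ is positively invariant. A direct computation gives $S_0 + P_0 = \Lambda/\mu$, so $\Sigma_0 \in \Gamma$, and every equilibrium satisfies $\dot N = 0$, hence $N_+ = \Lambda/\mu$ and $\Sigma_+ \in \Gamma$, with all its coordinates in \eqref{eq:EE} strictly positive when $R_0 > 1$. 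On $\Gamma$ the quantity $N$ is the constant $N_0 := \Lambda/\mu$, so the incidence becomes the genuinely bilinear term $\tfrac{\beta(1-p)}{N_0}(\theta A + I)S$, and the $R$-equation decouples (with $R = N_0 - S - A - I - P$ slaved to the others); it thus suffices to study the reduced flow in $(S,A,I,P)$.

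On the interior of $\Gamma$, with $g(z) = z - 1 - \ln z \geq 0$ (equality iff $z = 1$), I would take
\[
V = S_+\, g\!\left(\tfrac{S}{S_+}\right) + A_+\, g\!\left(\tfrac{A}{A_+}\right) + c_I\, I_+\, g\!\left(\tfrac{I}{I_+}\right) + c_P\, P_+\, g\!\left(\tfrac{P}{P_+}\right),
\]
with the weights dictated by the equilibrium identities $\tfrac{\beta(1-p)}{N_0}(\theta A_+ + I_+)S_+ = (\nu+\mu)A_+$, $\nu A_+ = (\delta+\mu)I_+$ and $\phi p S_+ = (\omega+\mu)P_+$, namely $c_I = \tfrac{\beta(1-p)S_+}{N_0(\delta+\mu)}$ (so that $c_I(\delta+\mu)I_+ = \tfrac{\beta(1-p)}{N_0}I_+S_+$) and $c_P = \tfrac{\omega}{\omega+\mu}$ (so that $c_P\,\phi p S_+ = \omega P_+$, mirroring the constant $\xi$ from the proof of Theorem~\ref{theo:glob:DFE}). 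Differentiating $V$ along \eqref{eq:model} restricted to $\Gamma$, using $g'(z) = 1 - 1/z$, and substituting the identities above to eliminate the constant terms, I expect everything to reorganise --- after one further use of $\phi p S_+ = (\omega+\mu)P_+$ --- into
\[
\dot V = -\Phi(S) + \omega P_+\Big(2 - \tfrac{S/S_+}{P/P_+} - \tfrac{P/P_+}{S/S_+}\Big) + b\Big(3 - \tfrac{S_+}{S} - \tfrac{I\,S\,A_+}{I_+\,S_+\,A} - \tfrac{A\,I_+}{A_+\,I}\Big),
\]
where $b = \tfrac{\beta(1-p)}{N_0}I_+S_+$ and $\Phi(S) = \big(\mu + \tfrac{\mu\phi p}{\omega+\mu} + \tfrac{\beta(1-p)\theta A_+}{N_0}\big)\tfrac{(S-S_+)^2}{S} \geq 0$. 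The second bracket is $\leq 0$ since $2 - t - 1/t \leq 0$ for $t > 0$, and the third is $\leq 0$ by the arithmetic--geometric-mean inequality (its three ratios multiply to $1$); hence $\dot V \leq 0$ on $\Gamma$.

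From the displayed form of $\dot V$, the equality $\dot V = 0$ forces $S = S_+$, then $P = P_+$ (from the second bracket) and $A/A_+ = I/I_+$ (from the third); restricting to the largest invariant subset and using that $\dot S = 0$ there then forces $A = A_+$ and $I = I_+$, so the only invariant subset of $\{\dot V = 0\}$ is $\{\Sigma_+\}$. By LaSalle's invariance principle \cite{lasalle1960some}, $\Sigma_+$ is globally asymptotically stable on the interior of $\Gamma$; since any solution issuing from $\partial\Gamma$ for which $A$ and $I$ do not both vanish enters the interior in positive time --- which is immediate from the sign structure of the vector field ($\dot A \geq -(\nu+\mu)A$, $\dot I \geq \nu A - (\delta+\mu)I$, $\dot S \geq \Lambda - (\tfrac{\beta(1-p)}{N_0}(\theta A + I) + \phi p + \mu)S$, $\dot P \geq -(\omega+\mu)P$, $\dot R \geq -\mu R$) --- the convergence extends to the whole biologically relevant part of $\Gamma$.

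The delicate point will be the choice of the two weights $c_I$ and $c_P$: because of the feedback cycle $S \to P \to S$ (through $\phi p$ and $\omega$) and the two incidence channels $\theta A$ and $I$, the expression for $\dot V$ before regrouping contains a large number of cross terms, and only this specific choice makes them collapse into a manifestly non-positive sum; in particular one must check that the residual $S$--$P$ coupling combines with the $-\phi p\,(S-S_+)^2/S$ contribution without changing sign, which reduces to the inequality $\omega P_+/S_+ = \omega\phi p/(\omega+\mu) < \phi p$. Minor additional care is needed for the domain of $V$ (the logarithms) and to exclude the invariant disease-free face $\{A = I = 0\}$ of $\Gamma$, on which solutions converge to $\Sigma_0$ rather than to $\Sigma_+$.
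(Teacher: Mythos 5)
Your proof is correct and follows essentially the same route as the paper: your Goh--Volterra function with weights $c_I=\beta(1-p)S_+/(N_0(\delta+\mu))$ and $c_P=\omega/(\omega+\mu)$ is exactly the paper's functional \eqref{eq:Lyapunov-functional-EE} with $c_1=c_2=1$ under the conditions \eqref{eq:coefficients-V}, and your regrouped expression for $\dot V$ (negative quadratic $S$-term, $S$--$P$ bracket bounded via $2-t-1/t\leq 0$, and the three-ratio AM--GM bracket) is just the fully simplified form of the paper's decomposition $\Phi_1+\Phi_2+\Psi_1$ after the cancellation $\Phi_4+\Psi_2=0$. Your explicit identification of the LaSalle limit set and the caveat about the invariant face $\lbrace A=I=0\rbrace$ (on which solutions tend to $\Sigma_0$) address points the paper leaves implicit, but the method is the same.
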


\begin{proof}
Let us first prove that $\Gamma$ is positively invariant under 
the flow induced by system \eqref{eq:model}. We denote by 
$x(t,\,x_0) = \big(S(t),\,A(t),\,I(t),\,R(t),\,P(t)\big)$ 
the solution of system \eqref{eq:model} starting from $x_0 \in \Gamma$.
Summing again the five equations of system \eqref{eq:model} leads to
\[
\dot{N}(t) + \mu N(t) = \Lambda,
\quad t>0,
\]
where $N(t) = S(t) + A(t) + I(t) + R(t) + P(t)$.
Consequently, we have
\[
N(t) = \left[N(0) - \frac{\Lambda}{\mu} \right]e^{-\mu t} 
+ \frac{\Lambda}{\mu}, \quad t>0.
\]
Now we have $N(0) = \frac{\Lambda}{\mu}$ since $x_0 \in \Gamma$.
It follows that $N(t) = \frac{\Lambda}{\mu}$ for all $t > 0$,
which proves that $\Gamma$ is positively invariant.
Next, we easily verify that $\Sigma_0$ and $\Sigma_+$ belong to $\Gamma$.
Now we turn to prove the global stability of $\Sigma_+$ in $\Gamma$. 
To that aim, we introduce the functional $V$ defined by
\begin{equation}
\label{eq:Lyapunov-functional-EE}
\begin{split}
V &= c_1 \left(S-S_+-S_+ \ln\frac{S}{S_+}\right)
+ c_2 \left(A-A_+-A_+ \ln\frac{A}{A_+}\right)\\
&+ c_3 \left(I-I_+-I_+ \ln\frac{I}{I_+}\right)
+ c_4 \left(P-P_+-P_+ \ln\frac{P}{P_+}\right),
\end{split}
\end{equation}
with positive coefficients $c_1$, $c_2$, $c_3$, and $c_4$ satisfying
\begin{equation}
\label{eq:coefficients-V}
c_1 = c_2,
\quad c_1 \omega P_+ = c_4 p\phi S_+,
\quad c_3 \nu A_+ = c_1 \beta (1-p)\dfrac{S_+ I_+}{N_+},
\end{equation}
where $N_+ = \frac{\Lambda}{\mu}$.
As constructed, $V$ is a non-negative functional and we have
\[
V = 0 \iff (S, A, I, R, P) = \Sigma_+.
\]
We compute the time derivative of $V$ along the solutions 
of system \eqref{eq:model} starting in $\Gamma$. Using the 
fact that $\Sigma_+$ is an equilibrium of system \eqref{eq:model}, 
we obtain, after simplifications, that
\[
\begin{split}
\dot{V} &= c_1 \left(1-\dfrac{S_+}{S}\right)\left[\beta (1-p) 
\dfrac{\theta A_+ + I_+}{N_+} S_+ - \beta (1-p) \dfrac{\theta A + I}{N_+} S
+ \omega (P-P+) - (p\phi+\mu)(S-S_+)\right] \\
&+ c_2 \left(1-\dfrac{A_+}{A}\right)\left[\beta (1-p) 
\dfrac{\theta A + I}{N_+} S - \beta (1-p) \dfrac{\theta A_+ + I_+}{N_+} S_+
- (\nu+\mu)(A-A_+)\right]\\
&+ c_3 \left(1-\dfrac{I_+}{I}\right)\left[\nu(A-A_+) - (\delta+\mu)(I-I_+)\right]\\
&+ c_4 \left(1-\dfrac{P_+}{P}\right)\left[p\phi(S-S_+) - (\omega+\mu)(P-P_+)\right].
\end{split}
\]
We perform elementary computations, so as to split $\dot{V}$ into four terms, as follows:
\[
\dot{V} = \Phi_1 + \Phi_2 + \Phi_3 + \Phi_4,
\]
with
\[
\begin{split}
\Phi_1 	&= c_1 \left(1-\dfrac{S_+}{S}\right)\left[\omega (P-P+)-(p\phi+\mu)(S-S_+)\right]\\
&+ c_4 \left(1-\dfrac{P_+}{P}\right)\left[p\phi(S-S_+)-(\omega+\mu)(P-P_+)\right],\\
\Phi_2 	&= -c_1\beta (1-p) \theta \dfrac{S_+ A_+}{N_+}\left(\dfrac{S}{S_+} + \dfrac{S_+}{S}-2\right),\\
\Phi_3	&= -c_3(\delta + \mu)I_+ \left(\dfrac{I}{I_+} + \dfrac{I_+}{I}-2\right)
+ c_3 \nu A_+ \left(1- \dfrac{I_+}{I}\right)\left(\dfrac{A}{A_+}-1\right)\\
&+ c_1 \dfrac{\beta (1-p)}{N_+}\left(1-\dfrac{S_+}{S}\right) (I_+ S_+ - IS)
+ c_2 \dfrac{\beta (1-p)}{N_+}\left(1-\dfrac{A_+}{A}\right) (I S - I_+ S_+),\\
\Phi_4 	&= \left[-c_2(\nu+\mu)A_+ + c_1 \beta (1-p)\theta 
\dfrac{S_+ A_+}{N_+}\right]\left(\dfrac{A}{A_+} + \dfrac{A_+}{A}-2\right).
\end{split}
\]
First, similar computations as in the proof of the global stability 
of the disease-free equilibrium $\Sigma_0$ show that $\Phi_1 \leq 0$, 
since $c_1 \omega P_+ = c_4 p\phi S_+$. Next, it is seen that $\Phi_2 \leq 0$, 
by virtue of the elementary inequality
\[
u + \frac{1}{u} - 2 \geq 0, \quad \forall u \in \R.
\]
Afterwards, remarking that $(\delta + \mu) I_+ = \nu A_+$, 
algebraic computations show that $\Phi_3 = \Psi_1 + \Psi_2$ with
\[
\begin{split}
&\Psi_1 = c_3 \nu A_+ \left(3-\dfrac{S_+}{S} - \dfrac{A I_+}{A_+ I} 
- \dfrac{A_+ I S}{A I_+ S_+}\right),\\
&\Psi_2 = c_3 \nu A_+ \left(\dfrac{A}{A_+} + \dfrac{A_+}{A} - 2\right).
\end{split}
\]
We introduce $u = \frac{S_+}{S}$, $v = \frac{A I_+}{A_+ I}$ 
and $w = \frac{A_+ I S}{A I_+ S_+}$.
It is observed that $u v w = 1$.
By virtue of the standard inequality
\[
\big(uvw\big)^{1/3} \leq \dfrac{u + v + w}{3},
\]
we deduce that $\Psi_1 \leq 0$.
Finally, using \eqref{eq:EE} and \eqref{eq:coefficients-V},
we show that $\Phi_4 + \Psi_2 = 0$.
Gathering the above results shows that $\dot{V} \leq 0$.
Therefore, we have proved that the functional $V$ is a Lyapunov 
function for the flow induced by the SAIRP model \eqref{eq:model}.
The conclusion follows once again from LaSalle's invariance principle \cite{lasalle1960some}.
\end{proof}

The theoretical results proved in Theorems~\ref{theo:glob:DFE} and \ref{theo:glob:EE} 
are illustrated in Figure~\ref{fig:DFE-EE-GAS}. 
\begin{figure}
\centering
\includegraphics[scale=0.75]{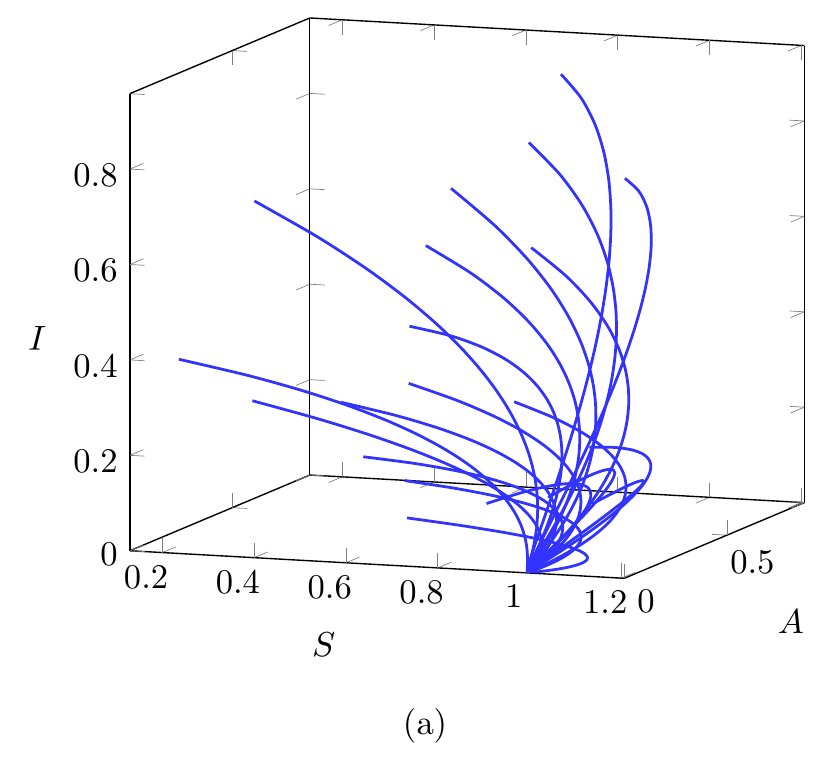}
\includegraphics[scale=0.75]{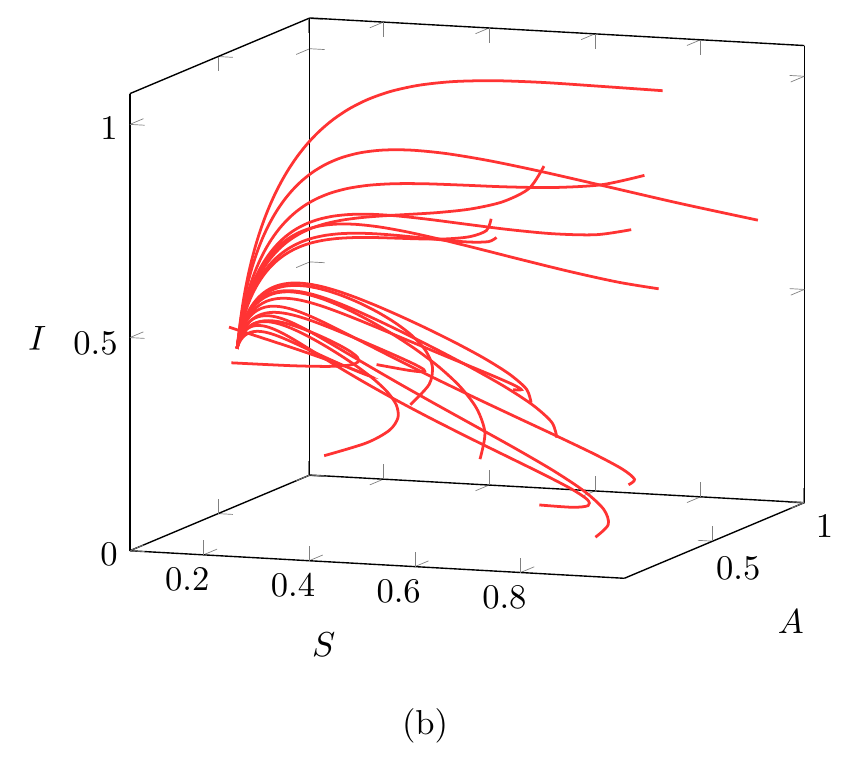}
\caption{Phase portraits in the $(S,\,A,\,I)$ space, illustrating 
the global stability of the equilibrium points.
(a) Global stability of the disease-free equilibrium ($R_0 < 1$).
(b) Global stability of the endemic equilibrium ($R_0 > 1$).}
\label{fig:DFE-EE-GAS}
\end{figure}


\section{Model with piecewise constant parameters}
\label{sec:model:piecewise}

The human behavior and the governmental public health decision makers 
can change the value of the basic reproduction number, $R_0$, and consequently 
the dynamics of model \eqref{eq:model}. In this section, we model the human 
behavior and the impact of the decision makers policies, by considering in model 
\eqref{eq:model} parameters determined by piecewise constant functions. 
We prove the existence and uniqueness of global solutions of the resulting model.
We start by subdividing the time line $[0,\,+\infty)$ into a finite number of $n$ intervals
\[
[T_0,\,T_1) \cup [T_1,\,T_2) \cup \dots \cup [T_n,\,+\infty),
\]
with disjoint unions, and introduce a piecewise constant 
function $\alpha$ defined on each time interval as
\[
\alpha(t) = \alpha_i,
\quad t \in [T_i,\,T_{i+1}),
\quad 0 \leq i \leq n,
\]
with $T_0=0$, $T_{n+1} = +\infty$ and $\alpha_i \in \R^9$.
Next, we consider the sequence of Cauchy problems defined 
for each initial condition $x_0 \in \Omega$ by
\begin{equation}
\label{eq:sequence-Cauchy-pb}
\begin{cases}
\begin{array}{lllll}
&x(0) = x_0, &\dot{x}(t) = f\big(x(t),\,\alpha_0\big),	
&T_0 < t < T_1,	& \\
&x(T_i) = \lim\limits_{\substack{t \to T_i\\
t \in (T_{i-1},\,T_i)}} x(t), &\dot{x}(t) = f\big(x(t),\,\alpha_i\big), 
&T_i < t < T_{i+1},	&	1 \leq i \leq n \, . 
\end{array}
\end{cases}
\end{equation}

We are now in conditions to derive the existence and uniqueness result. 

\begin{prop}
For any initial condition $x_0 \in \Omega$, the sequence of Cauchy 
problems given by \eqref{eq:sequence-Cauchy-pb} admits a unique global 
solution, denoted again by $x(t,\,x_0)$, whose components are non-negative.
Furthermore, the region $\Omega$ is positively invariant.
\end{prop}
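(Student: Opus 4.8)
The plan is to proceed by induction on the index $i$ of the time intervals $[T_i,\,T_{i+1})$, using Theorem~\ref{theo:well-posedness-model} as the base case and as the engine driving the induction step. The key observation is that on each interval the dynamics is governed by an autonomous system $\dot x = f(x,\,\alpha_i)$ of exactly the same structural form as \eqref{eq:model-short-form}, only with a different (but still biologically admissible) parameter vector $\alpha_i \in \R^9$; in particular the operator $f(\cdot,\,\alpha_i)$ is still quasi-positive in the sense recalled in the proof of Theorem~\ref{theo:well-posedness-model}, and summing the five components still gives $\dot N + \mu_i N \le \Lambda_i$, hence the same Gronwall estimate showing $\Omega$ (or rather the analogous region; see the obstacle below) is positively invariant. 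So the content of the proposition is that the solutions on consecutive intervals can be glued together at the breakpoints $T_i$ into a single global solution.

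First I would set up the induction hypothesis: suppose a unique solution $x(t,\,x_0)$ has been constructed on $[0,\,T_i)$, with non-negative components, staying in $\Omega$, and admitting a finite left-hand limit $x(T_i^-) := \lim_{t \to T_i,\, t < T_i} x(t)$. The existence of this limit follows because on $[T_{i-1},\,T_i)$ the solution is $C^1$ and remains in the compact set $\overline{\Omega}$, so $\dot x$ is bounded there and $x$ is uniformly continuous, hence extends continuously to the closed interval; moreover the limit point lies in $\Omega$ because $\Omega$ together with its relevant boundary is closed and invariant. Then I would invoke Theorem~\ref{theo:well-posedness-model} applied to the system with parameter $\alpha_i$ and initial datum $x(T_i^-) \in \Omega$ at time $T_i$: this yields a unique solution on $[T_i,\,\infty)$ with non-negative components staying in $\Omega$. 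Restricting it to $[T_i,\,T_{i+1})$ and concatenating with the already-constructed piece gives a solution on $[0,\,T_{i+1})$; continuity at $T_i$ is exactly the matching condition imposed in \eqref{eq:sequence-Cauchy-pb}, and uniqueness on the new interval is inherited from the uniqueness in Theorem~\ref{theo:well-posedness-model} together with the uniqueness already established on $[0,\,T_i)$. After $n$ such steps one reaches the final interval $[T_n,\,+\infty)$, on which Theorem~\ref{theo:well-posedness-model} directly furnishes a solution defined for all future time; since there are only finitely many breakpoints, the concatenated function is defined on all of $[0,\,+\infty)$, proving global existence, uniqueness, non-negativity, and positive invariance of $\Omega$ simultaneously.

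The main obstacle — and the one subtlety worth flagging — is that the region $\Omega$ in \eqref{eq:Omega} is defined through the ratio $\Lambda/\mu$, which depends on the parameters; when $\alpha$ jumps from $\alpha_{i-1}$ to $\alpha_i$ the value $\Lambda_i/\mu_i$ may increase or decrease, so the naive claim "$\Omega$ is positively invariant" needs to be read correctly. If $\Lambda$ and $\mu$ are assumed fixed across all intervals (only the transmission-related parameters $\beta, p, \theta, \phi, \omega, \nu, \delta$ varying), then $\Omega$ is literally the same set on every interval and the argument above goes through verbatim. If instead $\Lambda_i/\mu_i$ also varies, one should work on each interval with $\Omega_i = \{x \in (\R^+)^5 : 0 < \sum x_k \le \Lambda_i/\mu_i\}$, observe that the Gronwall bound $N(t) \le [N(T_i) - \Lambda_i/\mu_i] e^{-\mu_i(t-T_i)} + \Lambda_i/\mu_i$ keeps $N(t)$ below $\max(N(T_i),\,\Lambda_i/\mu_i)$, and conclude that the solution stays in the enlarged region $\widetilde\Omega = \{x \in (\R^+)^5 : 0 < \sum x_k \le \max_{0 \le i \le n} \Lambda_i/\mu_i\}$, which is the natural global biologically feasible region; positivity is unaffected since quasi-positivity of $f(\cdot,\,\alpha_i)$ holds regardless. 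I would state the proposition in whichever of these two readings matches the paper's standing assumptions and carry out the induction accordingly; apart from this bookkeeping about the invariant region, every analytic ingredient needed is already contained in the proof of Theorem~\ref{theo:well-posedness-model}.
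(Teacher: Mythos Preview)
Your proposal is correct and takes exactly the same approach as the paper, which simply says ``Applying Theorem~\ref{theo:well-posedness-model}, a finite number of times, directly provides the existence and uniqueness of global solutions.'' In fact your induction argument and your remark about the dependence of $\Omega$ on $\Lambda/\mu$ are considerably more careful than the paper's one-line proof.
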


\begin{proof}
Applying Theorem~\ref{theo:well-posedness-model}, a finite number of times, 
directly provides the existence and uniqueness of global solutions 
to problem \eqref{eq:sequence-Cauchy-pb}.
\end{proof}

Note that the solutions of problem \eqref{eq:sequence-Cauchy-pb} are continuous 
on the time interval $[T_0,\,+\infty)$, but may not be of class $\mathscr{C}^1$ 
at $t = T_i$, $0 \leq i \leq n-1$. From the modeling point of view,
each change of parameters occurring at time $t = T_i$ ($1 \leq i \leq n-1$)
corresponds, for example, to a public announcement of confinement/lift 
of confinement or prohibition of displacement. 


\section{Existence of pseudo-periodic solutions}
\label{sec:exist:pseudoperiodic}

In this section, we show that piecewise constant parameters can lead 
to pseudo-periodic solutions. This result has important practical applications 
in the context of the COVID-19 pandemic. During the first months of the COVID-19 pandemic, 
the first concern of governments was to decrease the level of infected individuals. 
This was achieved, in great part, by confinement decisions. However, those confinement 
decisions had a large impact on the economy, and it was important to decide to lift this ban.
Consequently, the risk of a premature relaxation was the second concern of governments, 
since it was feared to provoke a second wave of infection. The existence of multiple epidemic waves 
in a pandemic can be mathematically justified by the following theorem. 

\begin{theo}
Assume that the disease-free equilibrium, $\Sigma_0$, admits a non-trivial 
basin of attraction $\Omega_0 \subset \Omega$ if $R_0 < 1$, and that the 
endemic equilibrium $\Sigma^+$ admits a non-trivial basin of attraction 
$\Omega^+ \subset \Omega$ if $R_0 > 1$. Let $\alpha_0$ and $\alpha^+$ denote 
two sets of parameters of system \eqref{eq:model} such that
$R_0(\alpha_0) < 1$ and $R_0(\alpha^+) > 1$.
Let $x_0 \in \Omega_0$ and consider the sequence of Cauchy problems
\begin{equation}
\label{eq:sequence-pseudo-periodic-solutions}
\begin{cases}
\begin{array}{lllll}
&x(T_0) = x_0, &\dot{x}(t) = f\big(x(t),\,\alpha_0\big),&T_0 < t < T_1,	\\
&x(T_i) = \lim\limits_{\substack{t \to T_i\\t \in (T_{i-1},\,T_i)}} x(t),
&\dot{x}(t) = f\big(x(t),\,\alpha^+\big), &T_i < t < T_{i+1},
& \text{for}~i~\text{odd},\\
&x(T_i) = \lim\limits_{\substack{t \to T_i\\
t \in (T_{i-1},\,T_i)}} x(t), &\dot{x}(t) = f\big(x(t),\,\alpha_0\big), 
&T_i < t < T_{i+1}, & \text{for}~i~\text{even},
\end{array}
\end{cases}
\end{equation}
for $ 1 \leq i \leq n$, where $T_i$ is such that $x(T_i) \in \Omega^+$, 
for $i$ odd, and $x(T_{i}) \in \Omega_0$, for $i$ even.
Then the solution $x(t,\,x_0)$, of the latter sequence of Cauchy problems, 
exhibits pseudo-oscillations between a neighborhood $\msN_0$ of $\Sigma_0$ 
and a neighborhood $\msN^+$ of $\Sigma^+$ in $\Omega$.
\end{theo}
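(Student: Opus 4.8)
The plan is to build the solution piece by piece on each interval $[T_i,\,T_{i+1})$ and exploit the global stability results of Theorems~\ref{theo:glob:DFE} and~\ref{theo:glob:EE} together with the positive invariance of $\Omega$. First I would fix an arbitrary pair of neighborhoods $\msN_0$ of $\Sigma_0$ and $\msN^+$ of $\Sigma^+$, both contained in $\Omega$; shrinking them if necessary, we may assume $\msN_0 \cap \msN^+ = \emptyset$. On $[T_0,\,T_1)$ the parameters equal $\alpha_0$, so $R_0(\alpha_0) < 1$ and, since $x_0 \in \Omega_0$ lies in the basin of attraction of $\Sigma_0$, the solution of the Cauchy problem converges to $\Sigma_0$ as $t \to +\infty$. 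Hence there exists a finite time $T_1 > T_0$ at which $x(T_1) \in \msN_0$; I would additionally arrange that $x(T_1)$ lies in $\Omega^+$, which is possible provided $\msN_0$ has been chosen small enough that $\msN_0 \subset \Omega^+$ — this is the (mild) compatibility assumption one needs between the two basins and is where a hypothesis such as $\Sigma_0 \in \Omega^+$ (true since $\Sigma_0 \in \Gamma$ and the endemic basin is taken to contain a neighborhood of the whole attracting set) is invoked.

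Next I would run the induction. Suppose $x(T_i)$ has been constructed with $x(T_i) \in \Omega^+$ for $i$ odd (resp. $x(T_i) \in \Omega_0$ for $i$ even). On $[T_i,\,T_{i+1})$ the dynamics is governed by $\alpha^+$ when $i$ is odd: since $R_0(\alpha^+) > 1$ and $x(T_i) \in \Omega^+$, the solution converges to $\Sigma^+$, so there is a finite time $T_{i+1}$ with $x(T_{i+1}) \in \msN^+$, and (again by a smallness choice, $\msN^+ \subset \Omega_0$) also $x(T_{i+1}) \in \Omega_0$, which feeds the next step. When $i$ is even the roles of $\alpha_0$, $\Sigma_0$, $\msN_0$, $\Omega_0$ replace those of $\alpha^+$, $\Sigma^+$, $\msN^+$, $\Omega^+$, and the argument is identical. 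Continuity of the concatenated solution at each $T_i$ holds by the matching condition $x(T_i) = \lim_{t \to T_i^-} x(t)$ in \eqref{eq:sequence-pseudo-periodic-solutions}, and global existence on $[T_0,\,+\infty)$ follows from the preceding proposition (finitely many applications of Theorem~\ref{theo:well-posedness-model}), all while the trajectory stays in the positively invariant region $\Omega$. Thus the constructed solution visits $\msN_0$ at every even endpoint and $\msN^+$ at every odd endpoint, which is exactly the asserted pseudo-oscillation between a neighborhood of $\Sigma_0$ and a neighborhood of $\Sigma^+$.

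The main obstacle — really the only subtle point — is the geometric compatibility between the two basins of attraction: to be sure that the time $T_i$ required by the statement (namely $x(T_i) \in \Omega^+$ for $i$ odd and $x(T_i) \in \Omega_0$ for $i$ even) actually exists, one must know that the $\omega$-limit behaviour of the $\alpha_0$-flow drags the orbit into $\Omega^+$ and vice versa. This is handled by choosing the target neighborhoods small enough that $\msN_0 \subset \Omega^+$ and $\msN^+ \subset \Omega_0$; since $\Omega_0$ and $\Omega^+$ are assumed to be non-trivial (open) basins containing $\Sigma_0$ and $\Sigma^+$ respectively, and convergence to the equilibrium is guaranteed by the global stability theorems on the relevant invariant sets, such a choice is available. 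I would state this explicitly as the one standing assumption and then the rest is a routine induction; no estimate beyond invoking Theorems~\ref{theo:glob:DFE} and~\ref{theo:glob:EE} is needed. Finally I would remark that the number of switches $n$ can be taken arbitrarily large (even infinite, letting $n \to \infty$), which is what makes the pseudo-periodic behaviour persist and models the recurrence of epidemic waves.
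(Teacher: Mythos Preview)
Your proposal is correct and follows essentially the same route as the paper's proof: both argue piecewise, using on each interval $[T_i,T_{i+1})$ the attraction of the orbit toward the relevant equilibrium (guaranteed by the basin-of-attraction hypothesis) to produce a finite entry time into the target neighborhood, and then iterate. You are somewhat more explicit than the paper in isolating the geometric compatibility condition $\msN_0 \subset \Omega^+$, $\msN^+ \subset \Omega_0$ needed to feed each step into the next; the paper instead builds the condition $x(T_i)\in\Omega^+$ (resp.\ $\Omega_0$) directly into the hypotheses on the switching times and relegates the compatibility discussion to a remark after the proof.
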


\begin{proof}
The initial condition $x_0$ of the sequence of Cauchy problems 
\eqref{eq:sequence-pseudo-periodic-solutions} has been chosen in $\Omega_0$,
which is assumed to be the basin of attraction of $\Sigma_0$.
Since $\alpha_0(R_0) < 1$, then the solution $x(t,\,x_0)$ is attracted 
to $\Sigma_0$ and thus reaches a neighborhood $\msN_0$ of $\Sigma_0$
after a finite time $T_1$. Now it is assumed that $x(T_1)$ belongs 
to the basin of attraction $\Omega^+$ of $\Sigma^+$.
Since $\alpha^+(R_0) > 1$, then the solution $x(t,\,x_0)$ 
is now attracted to $\Sigma^+$ and thus reaches a neighborhood $\msN^+$ 
of $\Sigma^+$ after a finite time $T_2$. Next, we assume that $x(T_2)$ 
belongs to the basin of attraction $\Omega_0$ of $\Sigma_0$.
Since $\alpha_0(R_0) < 1$, then the solution $x(t,\,x_0)$ 
is now attracted to $\Sigma_0$ and thus reaches again the neighborhood 
$\msN_0$ of $\Sigma_0$ after a finite time $T_3$. Repeating those arguments 
a finite number of times leads to the desired conclusion, that is,
the solution $x(t,\,x_0)$ exhibits pseudo-oscillations between the neighborhoods 
$\msN_0$ and $\msN^+$ of $\Sigma_0$ and $\Sigma^+$, respectively.
\end{proof}

Note that the assumption $x(T_1) \in \Omega^+$ is directly satisfied 
for $T_1$ large enough, if $\Sigma_0$ belongs to the interior of $\Omega^+$ in $\Omega$.
Similarly, the assumption $x(T_2) \in \Omega_0$ is directly satisfied 
for $T_2$ large enough, if $\Sigma^+$ belongs to the interior of $\Omega_0$ in $\Omega$.
By virtue of Theorem~\ref{theo:glob:DFE}, the latter property is directly 
satisfied since $\Omega_0 = \Omega$. A numerical simulation of a solution 
of the sequence of Cauchy problems \eqref{eq:sequence-pseudo-periodic-solutions}, 
exhibiting pseudo-oscillations between a neighborhood $\msN_0$ of $\Sigma_0$ 
and a neighborhood $\msN^+$ of $\Sigma^+$ in $\Omega$, 
is depicted in Figure~\ref{fig:pseudo-periodic-solution}.
\begin{figure}
\centering
\includegraphics[width=0.85\textwidth]{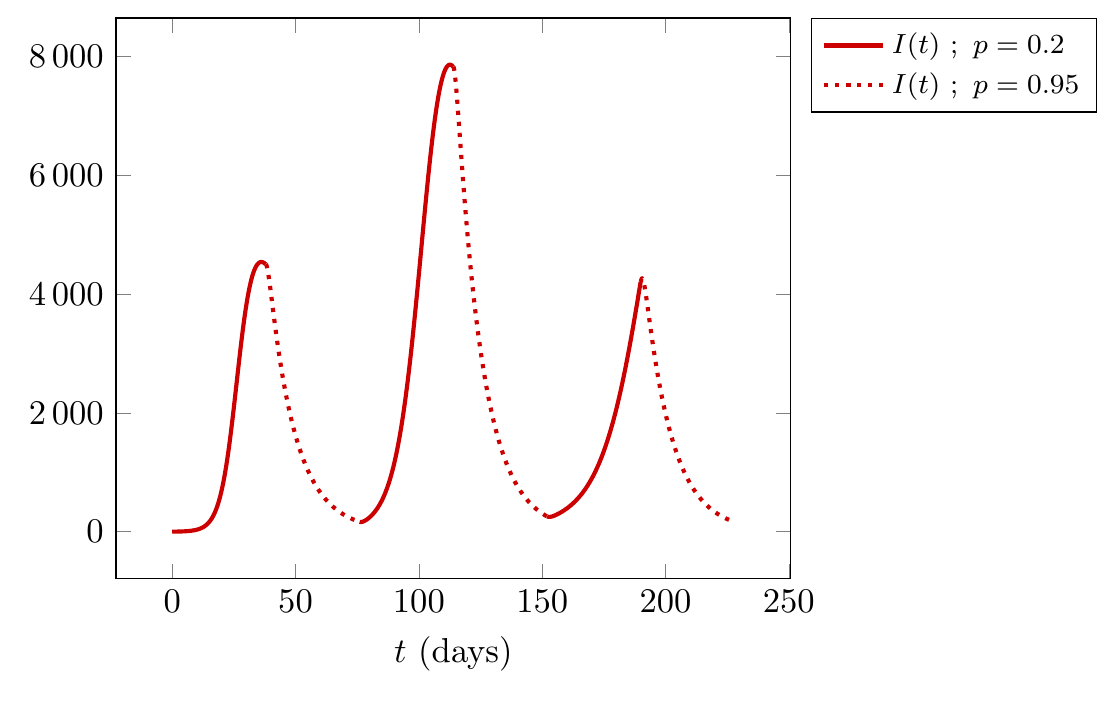}
\caption{Solution of the sequence of Cauchy problems
\eqref{eq:sequence-pseudo-periodic-solutions} exhibiting
pseudo-oscillations between a neighborhood $\msN_0$ of $\Sigma_0$ 
and a neighborhood $\msN^+$ of $\Sigma^+$ in $\Omega$. Here $I$ denotes 
the number of active infected individuals and $p$ the fraction, $0 < p < 1$, 
of susceptible individuals $S$ that is transferred to the protected class $P$.}
\label{fig:pseudo-periodic-solution}
\end{figure}


\section{Dynamics of a complex network of non-identical SAIRP models}
\label{sec:complexnet:model}

It is now widely admitted that mobilities play an important role
on the dynamics of epidemics, at numerous stages of their development.
In this section, we propose to study the propagation of the COVID-19 
outbreak in Portugal by modeling this country by a complex network
in which the six regions studied previously for the calibration 
of the model \eqref{eq:model} are considered.

In a first stage, we show how to construct a complex network of SAIRP models.
Let us consider the six regions of Portugal, as depicted in Figure~\ref{fig:portugal}.
Those six regions are connected by a finite number of links that define a graph 
$\msG = (\msV,\,\msE)$ made of a set $\msV$ of $6$ vertices, which correspond 
to the six regions (\emph{Norte}, \emph{Centro}, \emph{Lisboa e Vale do Tejo}, 
\emph{Alentejo}, \emph{Algarve}, \emph{Pinhal Litoral}), and of a set $\msE$ of edges, 
which model the main connections between those $6$ regions. In order to describe 
the state of each region, we couple each vertex of the graph with one instance
of the model \eqref{eq:sequence-Cauchy-pb}. Since each region has its own specificity,
we consider that the multiple instances of the model are non-identical, which means
that the values of the parameters can differ from one region to another.

We introduce the following notations:
\[
\begin{split}
&x_i = (S_i\,A_i,\,I_i,\,R_i,\,P_i)^T \in \R^5,\quad 1 \leq i \leq 6,\\
&X = (x_1,\,\dots,\,x_5)^T \in \left(\R^{5}\right)^6,\\
&H X = (H x_1,\,\dots,\,H x_6)^T \in \left(\R^{5}\right)^6,\\
&\alpha(t) = \big(\alpha_1(t),\,\dots,\,\alpha_6(t)\big) \in \left(\R^{9}\right)^6,
\end{split}
\]
where $H$ is the matrix of coupling strengths defined by
\begin{equation*}
H = 
\left[
\begin{array}{l l l l l}
\sigma_S &	0	       &	0			&	0			&	0		 \\
0		 &	\sigma_A   &	0			&	0			&	0		 \\
0		 &	0		   &	\sigma_I	&	0			&	0		 \\
0		 &	0		   &	0			&	\sigma_R	&	0		 \\
0		 &	0		   &	0			&	0			&	\sigma_P \\
\end{array}
\right],
\end{equation*}
with non negative coefficients $\sigma_S$, 
$\sigma_A$, $\sigma_I$, $\sigma_R$ and $\sigma_P$.

Next we define a matrix $L$ of connectivity as follows.
For each edge $(k,\,j) \in \mathscr{E}$, $k \neq j$,
we set $L_{j,k} = \eps_{j,k} > 0$.
If $(k,\,j) \notin \mathscr{E}$, $k \neq j$, we set $L_{j,k} = 0$.
The diagonal coefficients satisfy
\begin{equation*}
L_{j,j} = - \sum_{\substack{k = 1\\ k \neq j}}^n \eps_{k,j},
\end{equation*}
thus $L$ is a matrix whose sums of coefficients in each column is null.
For instance, the connectivity matrix of the graph corresponding 
to Figure~\ref{fig:portugal} is given by
\[
L = \left[
\begin{array}{cccccc}
L_{11}	  &	\eps_{12} &	0		  &	0			&	0			&	\eps_{16} \\
\eps_{21} &	L_{22}	  &	\eps_{23} &	\eps_{24}	&	0			&	\eps_{26} \\
0		  &	\eps_{32} &	L_{33}	  &	\eps_{34}	&	0			&	\eps_{36} \\
0		  &	\eps_{42} &	\eps_{43} &	L_{44}		&	\eps_{45}	&	0		  \\
0		  &	0		  &	0		  &	\eps_{54}	&	L_{55}		&	0	      \\
\eps_{61} &	\eps_{62} &	\eps_{63} &	0			&	0			&	L_{66}	  \\
\end{array}
\right],
\]
with
\[
\begin{split}
&L_{11} = -(\eps_{21}+\eps_{61}),						\\
&L_{22} = -(\eps_{12}+\eps_{32}+\eps_{42}+\eps_{62}),	\\
&L_{33} = -(\eps_{23}+\eps_{43}+\eps_{63}),				\\
&L_{44} = -(\eps_{24}+\eps_{34}+\eps_{54}),				\\
&L_{55} = -\eps_{45},									\\
&L_{66} = -(\eps_{16}+\eps_{26}+\eps_{36}).
\end{split}
\]
In this complex network model, we consider that an edge 
$(k,\,j) \in \mathscr{E}$, $k \neq j$, models a connection 
between two regions $k$ and $j$, which corresponds to human 
displacements from region $k$ towards region $j$. Moreover, 
the parameter $\sigma_S$ models the rate of susceptible 
individuals in region $k$ which migrate towards vertex $j$. 
The parameters $\sigma_A$, $\sigma_I$, $\sigma_R$ and $\sigma_P$ 
are defined analogously. This implies that our model can take 
into account the situation where a part of the population 
is not concerned with the migrations. For instance, it is 
relevant to consider $\sigma_I = \sigma_P = 0$,
while $\sigma_S > 0$ and $\sigma_A > 0$. The set of edges $\msE$ 
and the coupling strengths stored in the matrix $H$ define what is usually
called the \emph{topology} of the complex network.

Next, we explicit the equations that describe the state 
of region $j \in \lbrace 1,\,\dots,\,6 \rbrace$:
\begin{equation}
\label{eq:SAIRP-network}
\begin{cases}
\dot{S}_j = \Lambda_j - \beta_j (1-p_j)\frac{ \left( \theta_j A_j 
+ I_j \right)}{N_j} S_j - \phi_j p_j S_j + \omega_j P_j - \mu_j S_j
+ \sigma_S \displaystyle\sum_{k=1}^5 L_{j,k} S_k,\\[0.2 cm]
\dot{A}_j = \beta_j (1-p_j) \frac{\left( \theta_j A_j + I_j 
\right)}{N_j} S_j - \nu_j A_j - \mu_j A_j
+ \sigma_A \displaystyle\sum_{k=1}^5 L_{j,k} A_k,\\[0.2 cm]
\dot{I}_j = \nu_j A_j - \delta_j I_j - \mu_j I_j
+ \sigma_I \displaystyle\sum_{k=1}^5 L_{j,k} I_k,\\[0.2 cm]
\dot{R}_j = \delta_j I_j - \mu_j R_j
+ \sigma_R \displaystyle\sum_{k=1}^5 L_{j,k} R_k,\\[0.2 cm]
\dot{P}_j = \phi_j p_j S_j - \omega_j P_j - \mu_j P_j
+ \sigma_P \displaystyle\sum_{k=1}^5 L_{j,k} P_k,
\end{cases}
\end{equation}
where the time dependence is omitted, in order to lighten the notations.
In particular, the parameters can be determined by piecewise constant 
functions as in model \eqref{eq:sequence-Cauchy-pb}.

In the next theorem, we establish the existence and uniqueness 
of global solutions to the complex network problem \eqref{eq:SAIRP-network}
as for models \eqref{eq:model} and \eqref{eq:sequence-Cauchy-pb}.
Following \cite{cantin2019influence}, we introduce the minimum mortality 
rate $\mu_0$ defined by
\[
\mu_0 = \min\limits_{1 \leq j \leq 6} \mu_j,
\]
the positive coefficient $\Lambda_0$ defined by
\[
\Lambda_0 = \displaystyle\sum_{j=1}^6 \Lambda_j,
\]
and the compact region
\begin{equation}
\Theta = \left\lbrace (x_j)_{1\leq j \leq 30} \in (\R^+)^{30}~;~
\displaystyle\sum_{j=1}^{30} x_j \leq \frac{\Lambda_0}{\mu_0} \right\rbrace.
\label{eq:region-Theta}
\end{equation}

\begin{theo}
For any $X_0 \in \Theta$, the Cauchy problem given by \eqref{eq:SAIRP-network} 
and $X(0) = X_0$ admits a unique solution denoted by $X(t,\,X_0)$,
defined on $[0,\,\infty)$, whose components are non-negative.
Furthermore, the region $\Theta$ defined by \eqref{eq:region-Theta} is positively invariant.
\end{theo}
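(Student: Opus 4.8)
The plan is to mimic the structure of the proof of Theorem~\ref{theo:well-posedness-model}, adapting each of its three ingredients—local existence, quasi-positivity, and a priori boundedness—to the coupled system \eqref{eq:SAIRP-network}. First I would observe that the right-hand side of \eqref{eq:SAIRP-network} defines a locally Lipschitz vector field on $(\R^+)^{30}$ (the only nonsmooth terms are the fractions $\theta_j A_j + I_j$ over $N_j$, which are smooth as long as $N_j > 0$, and one checks that $N_j$ stays bounded away from $0$ along solutions, exactly as in the single-node case). Hence by the Picard--Lindel\"of theorem a unique local solution $X(t,\,X_0)$ exists for each $X_0 \in \Theta$. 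When the parameters are piecewise constant, I would invoke this local-existence argument successively on each interval $[T_i,\,T_{i+1})$, gluing by continuity as in the proof of the Proposition following \eqref{eq:sequence-Cauchy-pb}.

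Next I would establish non-negativity via quasi-positivity of the coupled vector field. The key point is that the coupling terms $\sigma_\bullet \sum_{k} L_{j,k}(\cdot)_k$ are quasi-positive because $L$ is a Metzler matrix: its off-diagonal entries $L_{j,k} = \eps_{j,k} \geq 0$, so when the $j$-th component of a compartment vanishes, the corresponding coupling contribution $\sigma_\bullet \sum_{k \neq j} L_{j,k}(\cdot)_k$ is $\geq 0$ (the negative diagonal term $L_{j,j}(\cdot)_j$ drops out since that component is zero). Combined with the quasi-positivity of the intra-node reaction terms—already verified in Theorem~\ref{theo:well-posedness-model}—the full field satisfies $F_i(x_1,\dots,x_{i-1},0,x_{i+1},\dots,x_{30},\alpha) \geq 0$ on $(\R^+)^{30}$, so Proposition~A.17 in \cite{smith2011dynamical} again yields forward non-negativity.

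For the invariance of $\Theta$ and global existence, I would sum all $30$ equations. Because each column of $L$ sums to zero, the coupling terms telescope: for each compartment, $\sum_{j,k} \sigma_\bullet L_{j,k}(\cdot)_k = \sigma_\bullet \sum_k \big(\sum_j L_{j,k}\big)(\cdot)_k = 0$. What remains is $\sum_j \dot{N}_j = \sum_j (\Lambda_j - \mu_j N_j) \leq \Lambda_0 - \mu_0 \sum_j N_j$, so writing $\mathcal{S}(t) = \sum_{j=1}^{30} x_j(t) = \sum_{j=1}^6 N_j(t)$ we get $\dot{\mathcal{S}} + \mu_0 \mathcal{S} \leq \Lambda_0$, and Gronwall's lemma gives $\mathcal{S}(t) \leq \big[\mathcal{S}(0) - \Lambda_0/\mu_0\big]e^{-\mu_0 t} + \Lambda_0/\mu_0$, which is $\leq \Lambda_0/\mu_0$ whenever $\mathcal{S}(0) \leq \Lambda_0/\mu_0$. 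This proves $\Theta$ is positively invariant, the solution stays in a compact set, and hence extends to $[0,\infty)$.

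The only genuinely delicate point—the "main obstacle"—is the fact that the coupling terms are sign-indefinite per node: individual compartments $S_j$, $A_j$, etc. can temporarily lose mass to neighbors faster than they gain it, so one cannot argue invariance node by node, only globally after summation. The resolution is precisely the column-sum-zero property of $L$ together with the Metzler structure, which I would state as the two clean algebraic facts used above ($\sum_j L_{j,k} = 0$ and $L_{j,k} \geq 0$ for $j \neq k$). With these in hand the proof is a direct transcription of the single-node argument, so I would keep it brief and refer to Theorem~\ref{theo:well-posedness-model} and \cite{cantin2019influence} for the details shared with the scalar case.
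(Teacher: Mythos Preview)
Your proposal is correct and follows essentially the same approach as the paper's own proof: local existence, quasi-positivity of the coupled vector field for non-negativity, and summing all equations together with the zero-column-sum property of $L$ to derive the differential inequality $\dot{N} + \mu_0 N \leq \Lambda_0$, whence invariance of $\Theta$ via Gronwall. The paper's proof is in fact considerably terser than yours; in particular it does not spell out the Metzler structure of $L$ or the piecewise-constant gluing, so your added detail on those points is a genuine improvement in exposition rather than a departure in method.
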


\begin{proof}
The existence and uniqueness of local in time solutions is immediate.
The non-negativity property is guaranteed by the quasi-positivity 
of the non-linear operator determined by the right hand side 
of system \eqref{eq:SAIRP-network}.
The total population in the complex network, defined by
\[
N(t) = \displaystyle\sum_{j=1}^6 \big[S_j(t) + A_j(t) 
+ I_j(t) + R_j(t) + P_j(t) \big], \quad t \geq 0,
\]
satisfies
\[
\dot{N}(t) \leq - \mu_0 N(t) + \Lambda_0,
\quad t \geq 0,
\]
since the matrix of connectivity $L$ is a zero column sum matrix.
It follows that
\[
N(t) \leq  \left[N(0) - \frac{\Lambda_0}{\mu_0}\right]e^{-\mu_0 t} 
+ \frac{\Lambda_0}{\mu_0}, \quad t \in [0,\,T],
\]
which leads to the intended conclusion.
\end{proof}

In the next section, we apply previous complex network to the real data 
from COVID-19 in Portugal, since the first confirmed active COVID-19 case, 
in March 2, 2020, until September 17, 2020. 


\section{Portugal case study: complex network with 6 regions}
\label{sec:casestudyPT}

In this section, we calibrate the number of active infected individuals 
$I$ given by the $SAIRP$ model with piecewise constant parameters 
\eqref{eq:sequence-Cauchy-pb}, to six distinct regions from Portugal. 
We show that the model allows to fit well the Portuguese real data available 
in \cite{DGS:covid19:PT}. Taking into account these parameter values, 
we perform numerical simulations of the complex network problem \eqref{eq:SAIRP-network}, 
where the main goal is to investigate the effect of the topology on the dynamics 
of the epidemics, and determine a topology that minimizes 
the level of active infected individuals. 


\subsection{Model with piecewise constant parameters: fitting to real data from 6 Portuguese regions}
\label{sec:fit-the-model}

We first consider the $SAIRP$ model with piecewise constant parameters 
\eqref{eq:sequence-Cauchy-pb} and show that the class $I$ fits well 
the active confirmed cases of infected individuals provided by the 
Portuguese National Authorities \cite{DGS:covid19:PT} 
for six regions of Portugal mainland, as depicted in Figure~\ref{fig:portugal}, 
where (1) represents the region \emph{Norte}, (2) \emph{Centro}, 
(3) \emph{Lisboa e Vale do Tejo}, (4) \emph{Alentejo}, 
(5) \emph{Algarve} and (6) \emph{Pinhal Litoral}. 
\begin{figure}[ht!]
\centering
\includegraphics[scale=0.99]{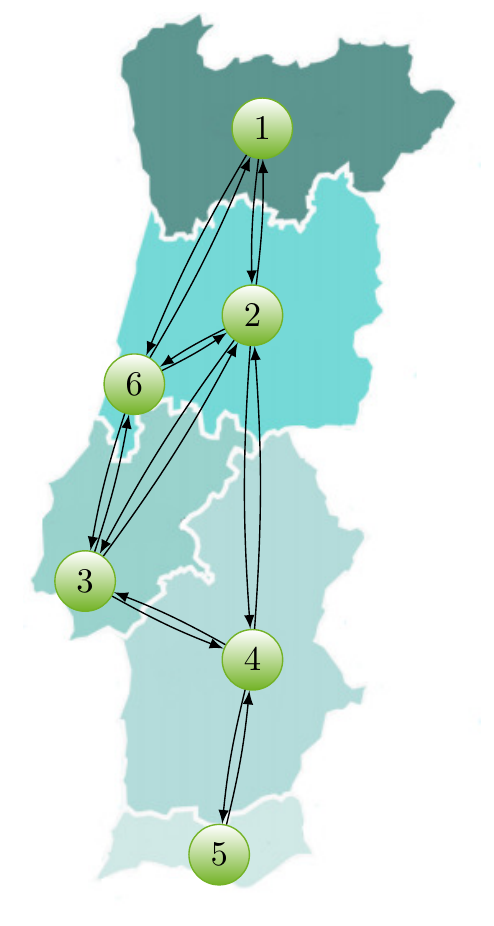}
\caption{Six regions of Portugal and some of their connections: 
(1) \emph{Norte}; (2) \emph{Centro}; (3) \emph{Lisboa e Vale do Tejo}; 
(4) \emph{Alentejo}; (5) \emph{Algarve}; (6) \emph{Pinhal litoral}.}
\label{fig:portugal}
\end{figure}

The initial conditions $S_0$, $A_0$, $I_0$, $R_0$ and $P_0$ at time 
$t = 1$, where $t=1$ corresponds to the day of the first confirmed 
cases of COVID-19 disease in each region, which differ from one region to another, 
see Table~\ref{table:init:regions}. Namely, the first cases occurred 
in each region at: \emph{Norte}, March 2, 2020; \emph{Centro},  March 3, 2020; 
\emph{Lisboa e Vale do Tejo}, March 3, 2020; \emph{Alentejo}, March 18, 2020; 
\emph{Algarve}, March 8, 2020.  
\begin{table}[!htb]
\caption{Initial conditions for each of the six regions in Portugal.}
\label{table:init:regions}	
\centering
\begin{tabular}[center]{ l l l l l l }  \hline
Region  & $S_0$ & $A_0$ & $I_0$ & $R_0$ & $P_0$  \\  \hline 
\emph{Norte} & $3125804$ & $2$ & $\frac{2}{0.15}$ & $0$ & $0$\\ 
\emph{Centro} & $1480664$  & $1$ & $\frac{1}{0.15}$  & $0$ & $0$ \\
\emph{Lisboa e Vale do Tejo} &  $3659871$ &  $1$ & $\frac{1}{0.15}$  & $0$ & $0$ \\
\emph{Alentejo} & $166726$  & $2$ & $\frac{2}{0.15}$  & $0$ & $0$ \\
\emph{Algarve} & $451006$  & $1$ & $\frac{1}{0.15}$  & $0$ & $0$ \\
\emph{Pinhal litoral} & $271078$ & $3$ &  $\frac{3}{0.15}$ & $0$ & $0$ \\ \hline	
\end{tabular}  
\end{table}

Some of the parameters take constant values for all the period of time 
under consideration and are the same for the six regions, 
see Table~\ref{table:param:constant}. Others, even taking constant values 
for all time window, differ from one region to another, 
see Table~\ref{table:param:const:difer}.
\begin{table}[!htb]
\caption{Constant parameter values that take the same values for the six regions, in Portugal.}
\label{table:param:constant}	
\centering
\begin{tabular}[center]{ l l l l }  \hline
Parameter & Description & Value & Reference \\  \hline 
$\Lambda$ & Recruitment rate & $\frac{0.19\%\times N_0}{365}$ & \cite{pordata} \\ 
$\mu$ & Natural death rate & $\frac{1}{81 \times 365}$ & \cite{pordata} \\
$\theta$ & Modification parameter & $1$ & \cite{Silvaetal:covidPT} \\
$v$  & Transfer rate from $A$ to $I$ & $1$ & \cite{Silvaetal:covidPT} \\
$q$  & Fraction of $A$ individuals that are confirmed to be infected & $0.15$ & \cite{Silvaetal:covidPT}\\ 
\hline	
\end{tabular}  
\end{table}
\begin{table}[!htb]
\caption{Constant parameter values that differ from one region to another, in Portugal.}
\label{table:param:const:difer}	
\centering
\begin{tabular}[center]{ l c c c } \hline
Region  & $\phi$ & $\delta$ & $w$ \\  
& (\footnotesize transfer rate from $S$ to $P$) 
& (\footnotesize transfer rate from $I$ to $R$) 
&  (\footnotesize transfer rate from $P$ to $S$) \\ \hline 
\emph{Norte} &  $\phi=\frac{1}{12}$ & $\delta=\frac{1}{27}$ & $w=\frac{1}{45}$ \\ 
\emph{Centro} &  $\phi=\frac{1}{11}$ & $\delta=\frac{1}{27}$ & $w=\frac{1}{45}$\\ 
\emph{\small Lisboa e Vale do Tejo} &  $\phi=\frac{1}{11}$ & $\delta=\frac{1}{27}$ & $w=\frac{1}{45}$ \\
\emph{Alentejo} & $\phi=1$ & $\delta=\frac{1}{21}$ & $w=\frac{1}{41}$  \\
\emph{Algarve} &  $\phi=\frac{1}{6}$ & $\delta=\frac{1}{21}$ & $w=\frac{1}{45}$  \\
\emph{Pinhal litoral} &  $\phi=1$ & $\delta=\frac{1}{21}$ & $w=\frac{1}{43}$ \\ \hline	
\end{tabular}  
\end{table}

The parameters $\beta$, $m$ and $p$ are piecewise continuous, taking 
different values depending on the considered time interval and the region, 
see Tables~\ref{table:param:piecewise:1} and \ref{table:param:piecewise:2}.
\begin{table}[!htb]
\caption{Piecewise constant parameter values for the regions 
\emph{Norte}, \emph{Centro} and \emph{Lisboa e Vale do Tejo}}
\label{table:param:piecewise:1}	
\centering
\begin{tabular}[center]{ l c c c }  \hline
Region  & $\beta$ & $p$  & $m$  \\  
& (\footnotesize transmission rate) & (\footnotesize transfer fraction from $S$ to $P$) 
&  (\footnotesize transfer fraction from $P$ to $S$) \\ \hline 
\emph{\bf Norte} &  &   &  \\
Time interval: $[1;75]$ & $\beta_1=1.40$ & $m_1=0.09$ & $p_1=0.675$\\  
Time interval: $[75;122]$ & $\beta_2=0.15$ & $m_2=0.15$ & $p_2=0.60$ \\
Time interval: $[122;170]$ & $\beta_3=1.28$ & $m_3=0.14$ & $p_3=0.56$ \\
Time interval: $[170;200]$ & $\beta_4=1.46$ & $m_4=0.17$ & $p_4=0.52$\\ \hline
\emph{\bf Centro} &   &   &  \\ 
Time interval: $[1;69]$ & $\beta_1=1.351$ & $m_1=0.10$ & $p_1=0.675$ \\  
Time interval: $[69;94]$ &	 $\beta_2=0.45$ & $m_2=0.09$ & $p_2=0.60$ \\
Time interval: $[94;164]$ & $\beta_3=1.10$ & $m_3=0.16$ & $p_3=0.56$\\
Time interval: $[164;199]$ & $\beta_4=1.33$ & $m_4=0.17$ & $p_4=0.54$\\ \hline
\emph{\small \bf Lisboa e Vale do Tejo} &   &   &  \\
Time interval: $[1;36]$ &	$\beta_1=1.45$ & $m_1=0.16$ & $p_1=0.675$\\
Time interval: $[36;54]$ & $\beta_2=1.13$ & $m_2=0.13$ & $p_2=0.69$\\
Time interval: $[54;129]$ & $\beta_3=1.41$ & $m_3=0.13$ & $p_3=0.56$\\	
Time interval: $[129;160]$ & $\beta_4=1.10$ & $m_4=0.11$ & $p_4=0.57$\\
Time interval: $[160;199]$ & $\beta_5=1.43$ & $m_5=0.17$ & $p_5=0.54$ \\ \hline
\end{tabular}  
\end{table}
\begin{table}[!htb]
\caption{Piecewise constant parameter values for the regions \emph{Alentejo}, 
\emph{Algarve} and \emph{Pinhal Litoral}}
\label{table:param:piecewise:2}	
\centering
\begin{tabular}[center]{ l c c c } \hline
Region  & $\beta$ & $p$  & $m$  \\  
& (\footnotesize transmission rate) & (\footnotesize transfer fraction from $S$ to $P$) 
&  (\footnotesize transfer fraction from $P$ to $S$) \\ \hline
\emph{\bf Alentejo} &  &  &   \\
Time interval: $[1;11]$ & $\beta_1=3.50$ & $m_1=0.40$ & $p_1=0.675$\\
Time interval: $[11;29]$ & $\beta_2=7.80$ & $m_2=0.45$ & $p_2=0.45$\\
Time interval: $[29;84]$ & $\beta_3=4.40$ & $m_3=0.38$ & $p_3=0.55$\\
Time interval: $[84;104]$ & $\beta_4=6.90$ & $m_4=0.46$ & $p_4=0.48$\\
Time interval: $[104;124]$ & $\beta_5=2.70$ & $m_5=0.30$ & $p_5=0.56$\\
Time interval: $[124;184]$ & $\beta_6=4.10$ & $m_6=0.36$ & $p_6=0.44$\\ \hline
\emph{\bf Algarve} &  &  &  \\
Time interval: $[1;36]$ & $\beta_1=1.11$ & $m_1=0.10$ & $p_1=0.60$\\
Time interval: $[36;69]$ & $\beta_2=1.37$ & $m_2=0.20$ & $p_2=0.52$\\
Time interval: $[69;91]$ & $\beta_3=0.80$ & $m_3=0.20$ & $p_3=0.55$\\
Time interval: $[91;109]$ & $\beta_4=1.90$ & $m_4=0.32$ & $p_4=0.45$\\
Time interval: $[109;149]$ & $\beta_5=1.35$ & $m_5=0.22$ & $p_5=0.58$\\
Time interval: $[149;194]$ & $\beta_6=1.35$ & $m_6=0.30$ & $p_6=0.55$\\ \hline
\emph{\bf Pinhal litoral} &   &  &  \\
Time interval: $[1;7]$ & $\beta_1=3.00$ & $m_1=0.09$ & $p_1=0.675$\\
Time interval: $[7;23]$ & $\beta_2=8.15$ & $m_2=0.25$ & $p_2=0.44$\\
Time interval: $[23;77]$ & $\beta_3=5.00$ & $m_3=0.16$ & $p_3=0.55$\\
Time interval: $[77;115]$ & $\beta_4=7.80$ & $m_4=0.25$ & $p_4=0.42$\\
Time interval: $[115;145]$ & $\beta_5=4.15$ & $m_5=0.15$ & $p_5=0.52$\\
Time interval: $[145;186]$ & $\beta_6=7.75$ & $m_6=0.27$ & $p_6=0.42$\\ \hline	
\end{tabular}  
\end{table}

Figures~\ref{fig:fit:Norte:Centro}--\ref{fig:fit:algarve:pinhal} show that 
the $SAIRP$ model \eqref{eq:sequence-Cauchy-pb} with piecewise constant parameters 
describes well the active infected cases in the six Portuguese regions under consideration.  
\begin{figure}[ht!]
\centering
\includegraphics[scale=0.5]{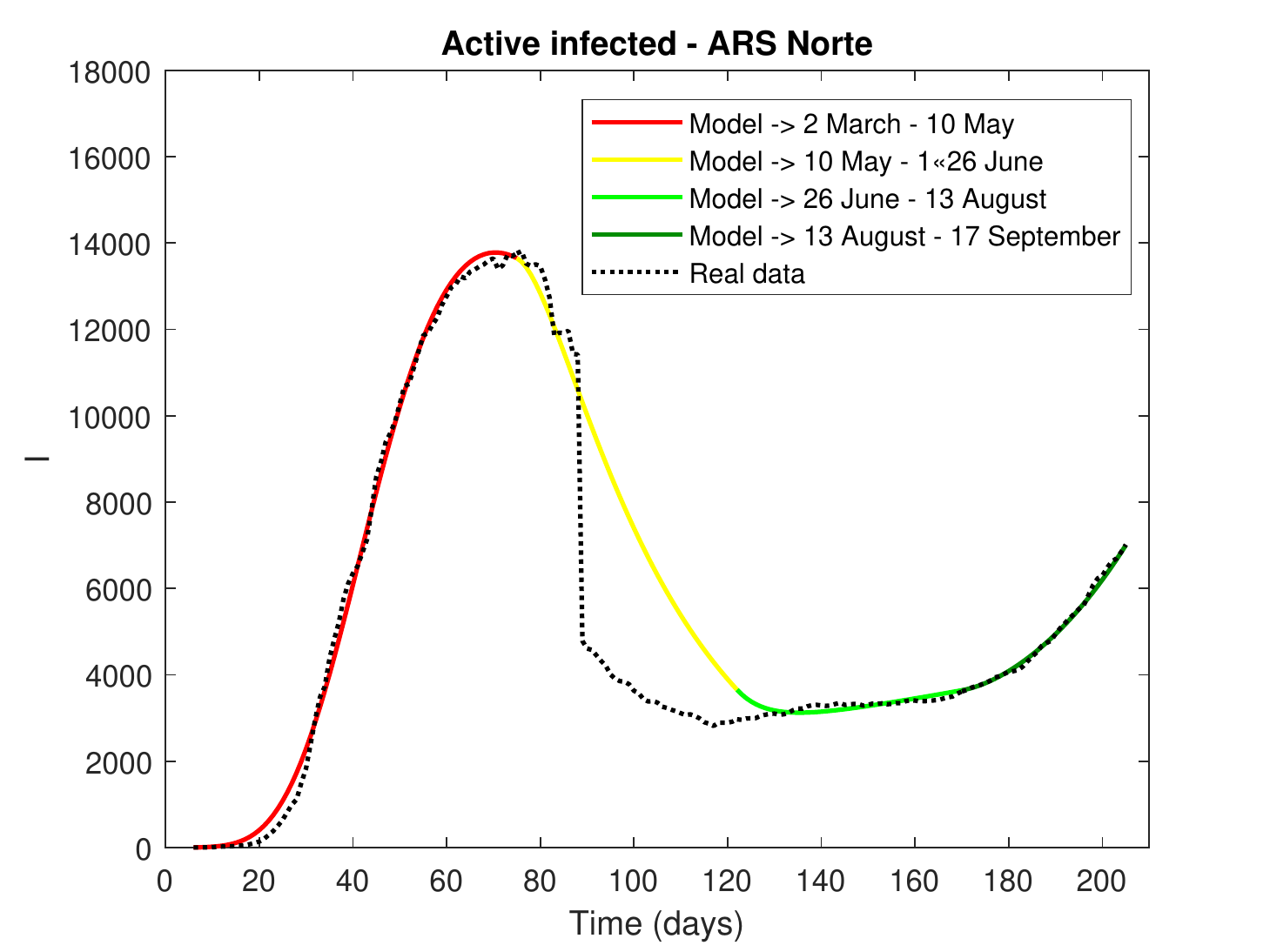}
\includegraphics[scale=0.5]{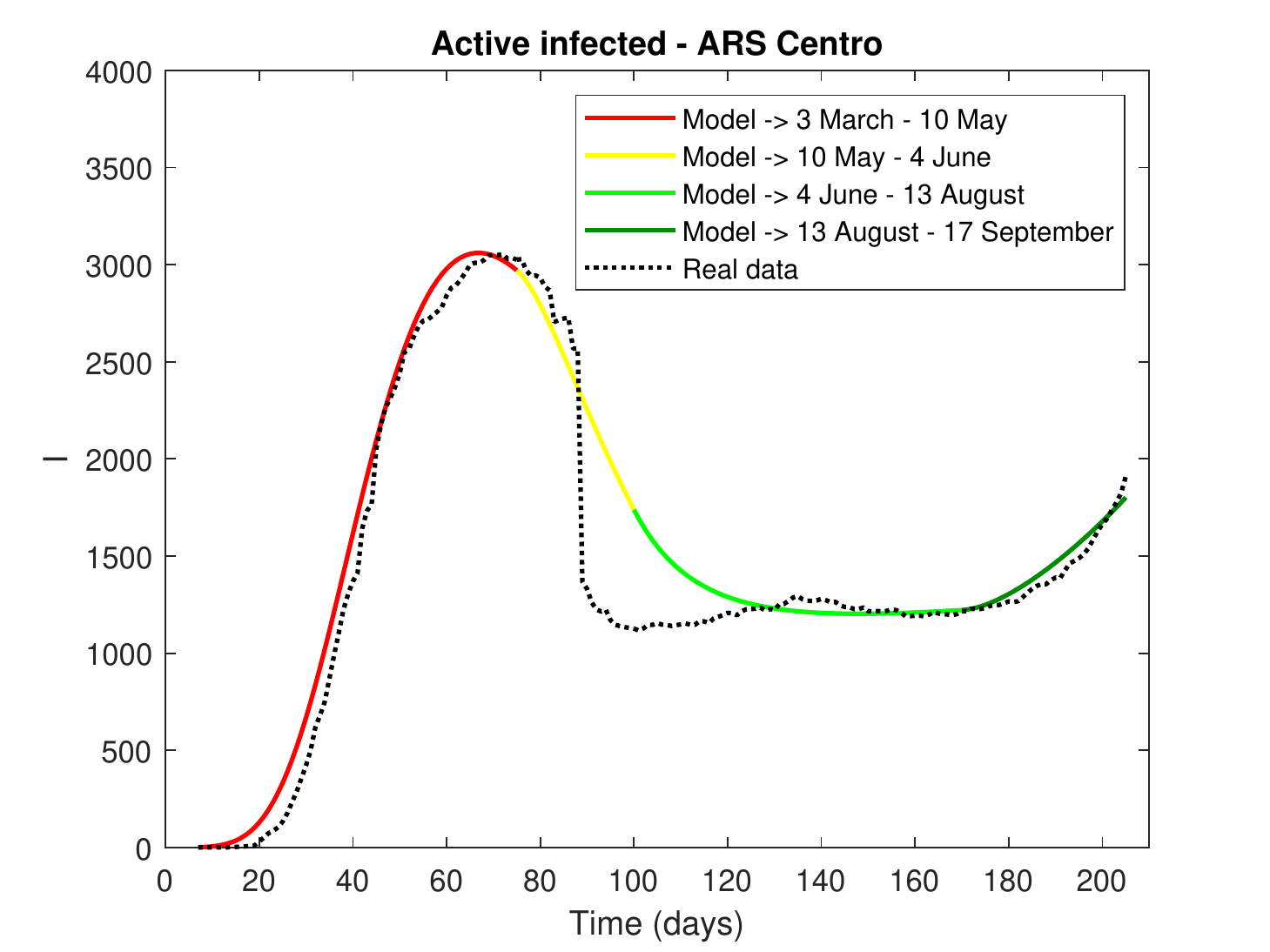}
\caption{Active infected cases of COVID-19. The black dotted 
lines represent the real data and the continuous lines represent 
the active infected individuals, $I$, output of the $SAIRP$ model 
with piecewise constant parameters from Table~\ref{table:param:piecewise:1}. 
Left: region \emph{Norte}.  Right: region \emph{Centro}.}
\label{fig:fit:Norte:Centro}
\end{figure}
\begin{figure}[ht!]
\centering
\includegraphics[scale=0.5]{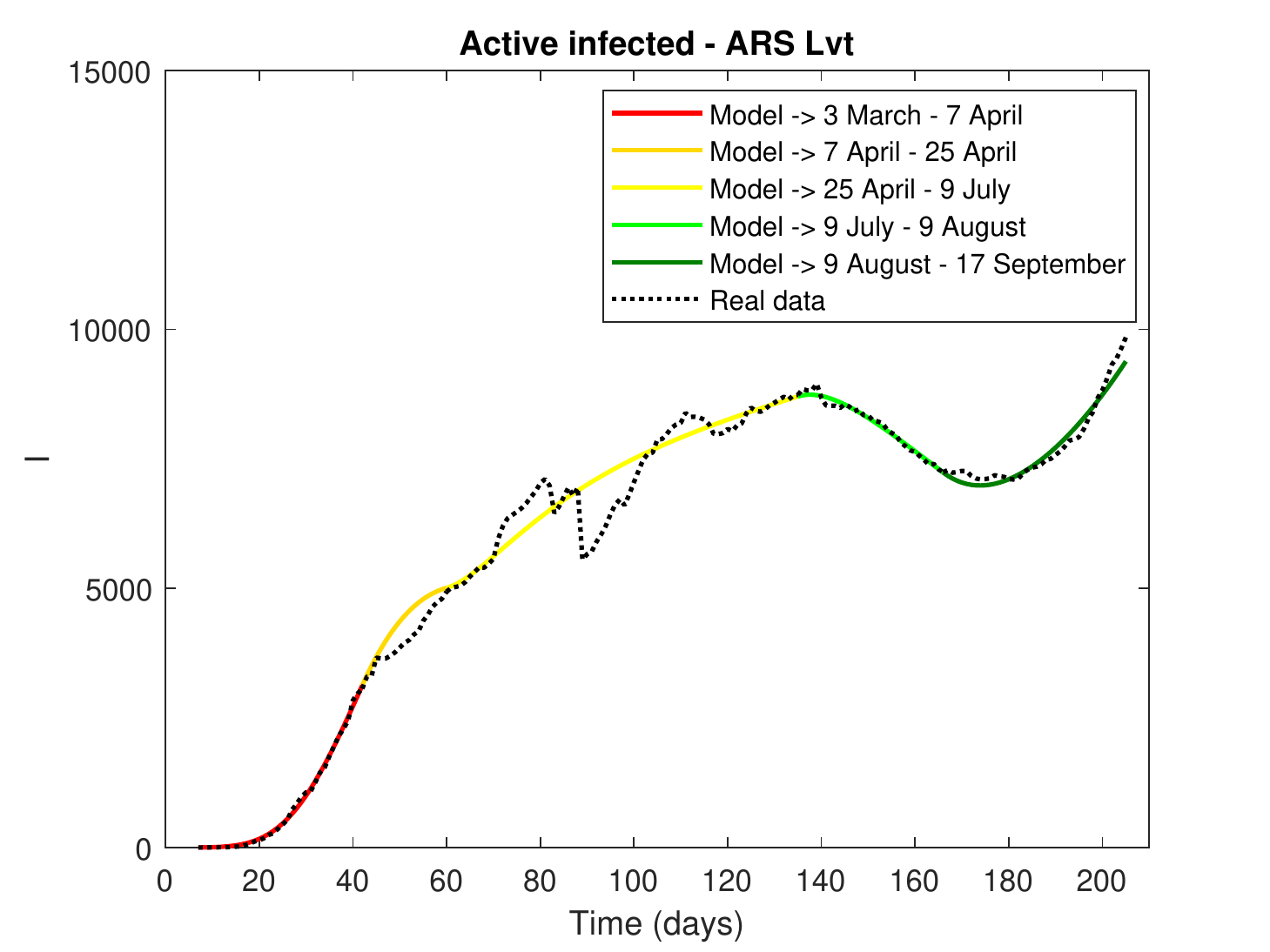}
\includegraphics[scale=0.5]{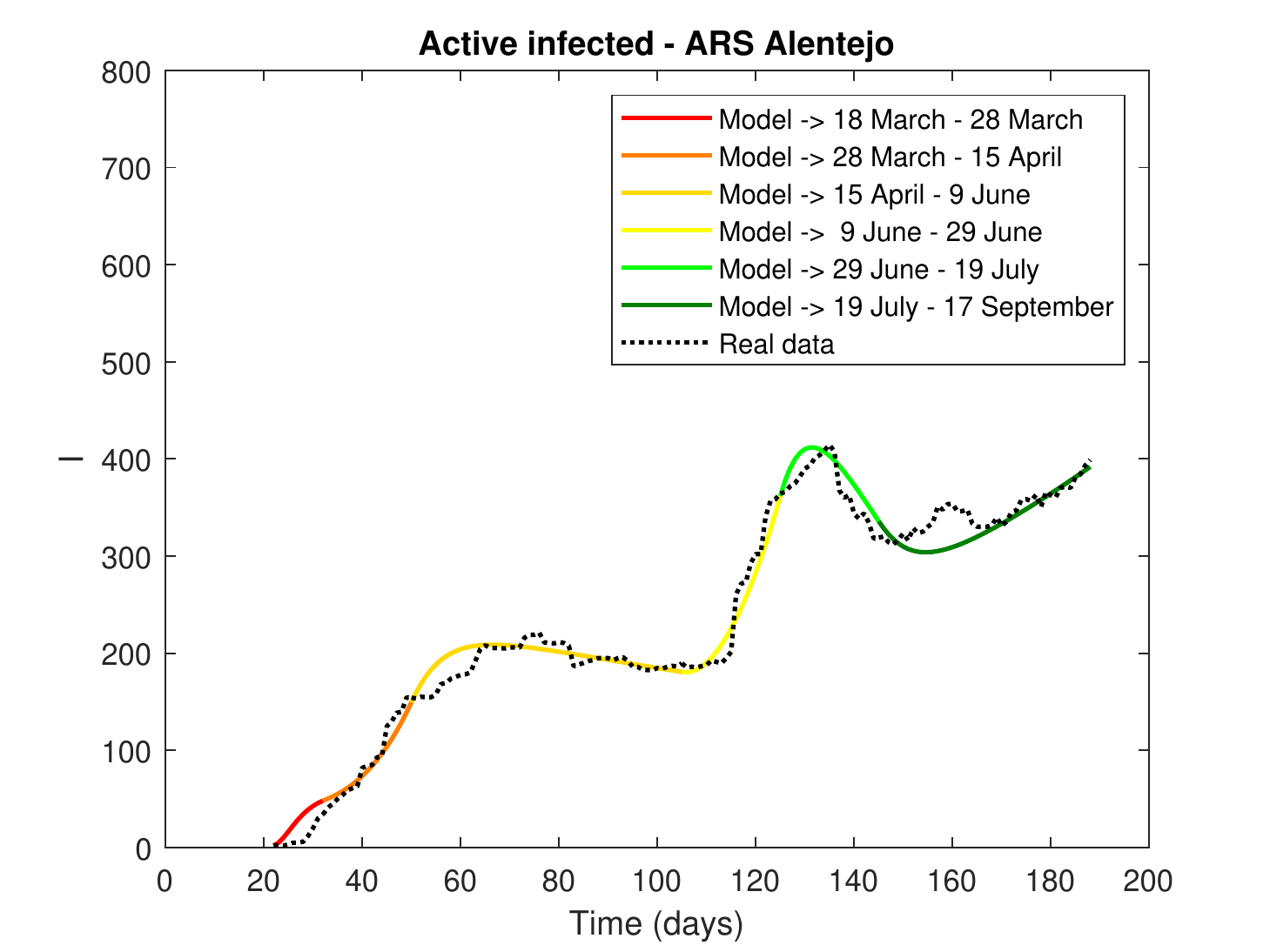}
\caption{Active infected cases of COVID-19. The black dotted lines 
represent the real data and the continuous lines represent the active 
infected individuals, $I$, output of the $SAIRP$ model with piecewise 
constant parameters from Tables~\ref{table:param:piecewise:1}--\ref{table:param:piecewise:2}. 
Left: region \emph{Lisboa e Vale do Tejo}. Right: region \emph{Alentejo}.}
\label{fig:fit:LVT:Alent}
\end{figure}
\begin{figure}[ht!]
\centering
\includegraphics[scale=0.5]{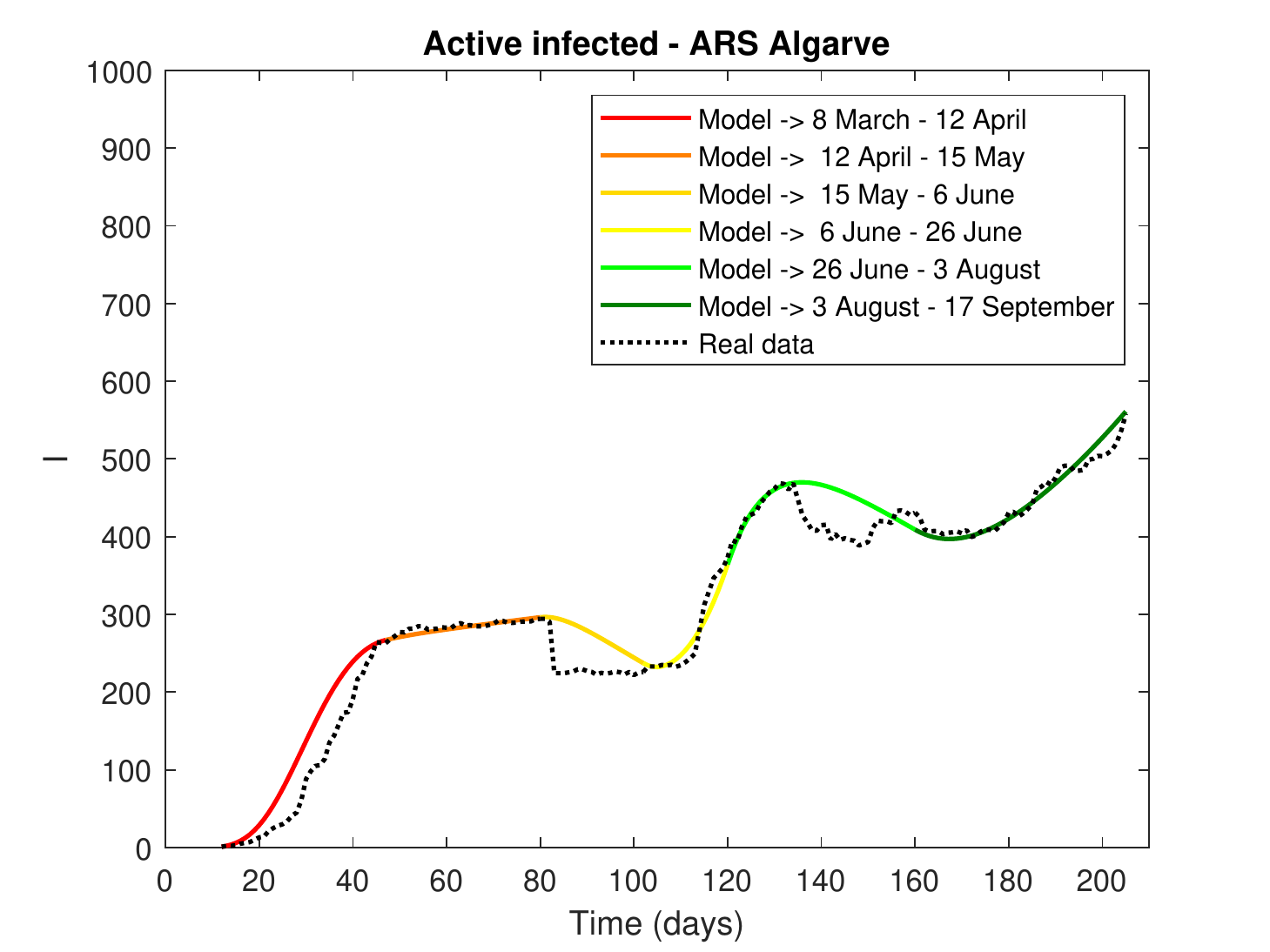}
\includegraphics[scale=0.5]{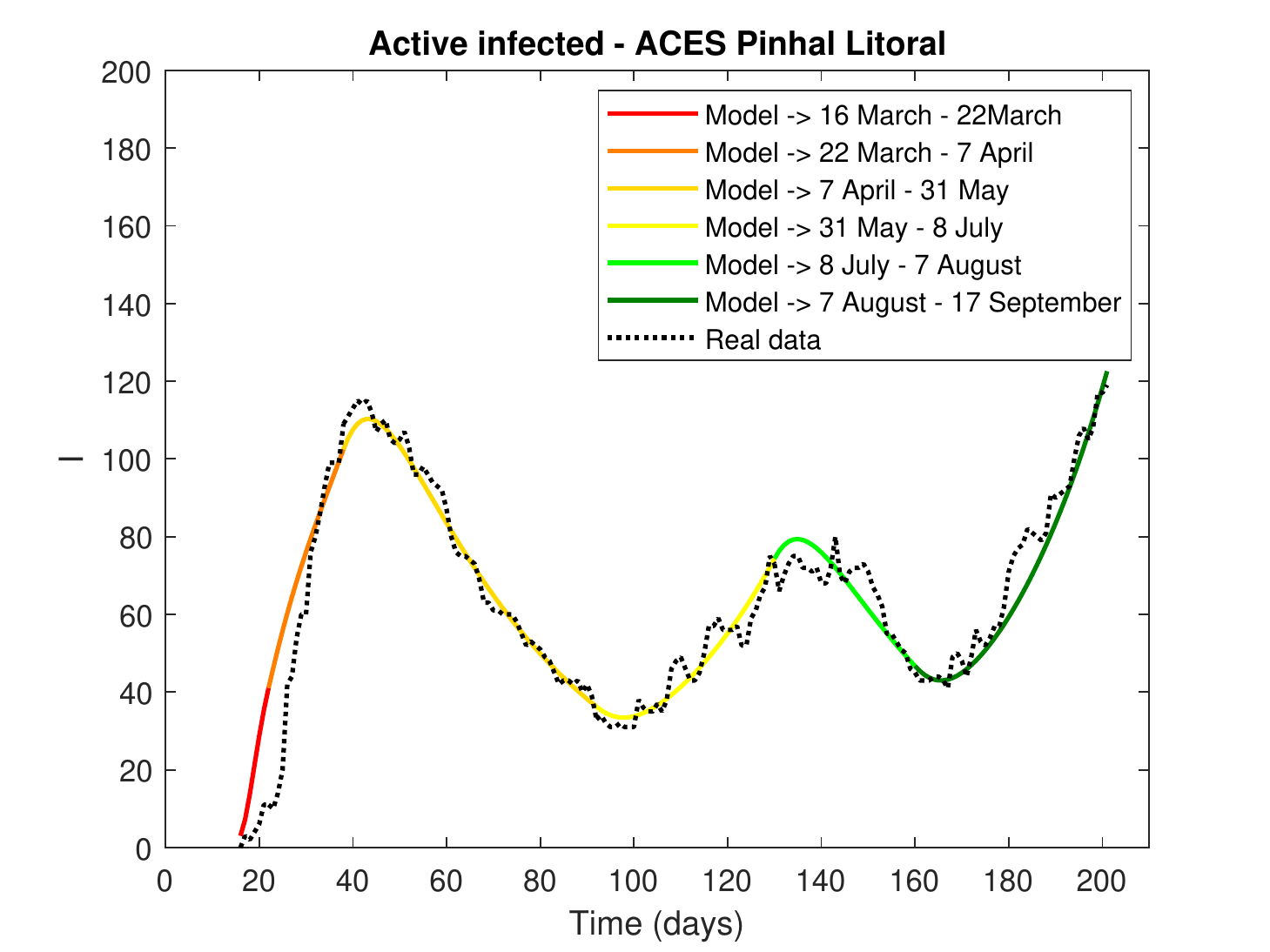}
\caption{Active infected cases of COVID-19. The black dotted lines 
represent the real data and the continuous lines represent the active 
infected individuals, $I$, output of the $SAIRP$ model with piecewise 
constant parameters from Table~\ref{table:param:piecewise:2}. 
Left: region \emph{Algarve}. Right: region \emph{Pinhal litoral}.}
\label{fig:fit:algarve:pinhal}
\end{figure}

Moreover, the active infected cases of all Portugal can also be modeled, 
see Figure~\ref{fig:fit:portugal}. 
\begin{figure}[ht!]
\centering
\includegraphics[scale=0.7]{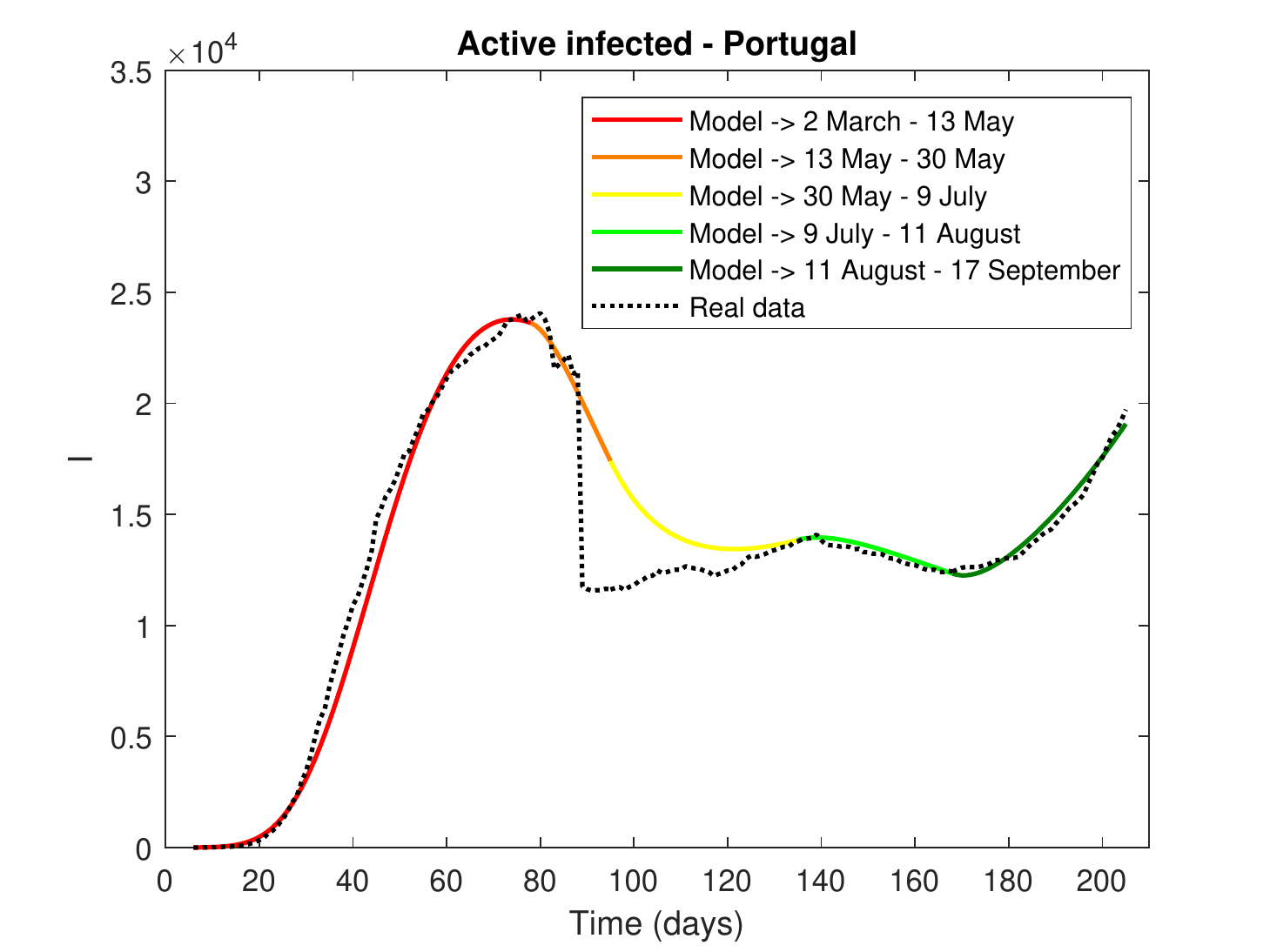}
\caption{Active infected cases of COVID-19 in Portugal. The black 
dotted lines represent the real data and the continuous lines 
represent the active infected individuals, $I$, output of the $SAIRP$ 
model with piecewise constant parameters. Initial conditions: 
$S_0 =  10295894$, $I_0 = 2$, $A_0 = \tfrac{2}{0.15}$, $R_0 = 0$, $Q_0 = 0$. 
The constant parameters take the values $\phi=1$, $\delta=\frac{1}{27}$, 
$w=\frac{1}{45}$ and the remaining constant parameters take the values 
in Table~\ref{table:param:constant}. The first infected case with 
COVID-19 in Potugal, $t=1$, occurred on March 2, 2020. 
Piecewise constant parameters:  $t \in [1;73]$,				
$\beta_1=1.505; m_1=0.09; p_1=0.675$;
$t \in [73;90]$, $\beta_2=0.50; m_2=0.09; p_2=0.65$;
$t \in [90;130]$,
$\beta_3=1.15; m_3=0.18; p_3=0.58$;
$t \in [130;163]$,
$\beta_4=0.96; m_4=0.16; p_4=0.61$;
$t \in [163;200]$,
$\beta_5=1.50; m_5=0.17; p_5=0.58$. }
\label{fig:fit:portugal}
\end{figure}


\subsection{Complex network model: numerical simulations for COVID-19 in Portugal}

Considering the six regions from Figure~\ref{fig:portugal}, and the parameter values 
from Section~\ref{sec:fit-the-model}, we perform numerical simulations of the complex 
network problem \eqref{eq:SAIRP-network}. The main goal is to investigate the effect 
of the topology on the dynamics of the epidemics. In particular, we analyze the existence 
of a topology which minimizes the average number of active infected individuals during 
a fixed time interval. Moreover, we analyze if other topologies are likely to worsen 
the level of infection.

In order to model the mobilities of susceptible and asymptomatic individuals, 
we set $\sigma_S = \sigma_A > 0$, whereas we fix $\sigma_I = \sigma_R = \sigma_P = 0$.
We test a sample of $1000$ randomly generated topologies among $2^{16}$ topologies 
(\emph{id est} sets of edges), for $\sigma_S = \sigma_A \in [0.01,\,0.1]$.
For each randomly generated topology, we perform a numerical integration 
of the complex network problem \eqref{eq:SAIRP-network}, with the same parameters 
$\alpha_1,\,\alpha_2,\,\dots,\,\alpha_6$ as considered in Section~\ref{sec:fit-the-model}. 
The computation has been performed with \texttt{Python 3.5} language, 
in a Debian/Gnu-Linux environment. For each integration, we have computed 
the number of average number of active infected individuals, that is, 
individuals of the class $I$, per day. The results are depicted 
in Figure~\ref{fig:sample-random-topologies}, where the average number 
of active infected individuals, per day, is depicted for the randomly 
generated topologies and $\sigma_S = \sigma_A = 0.01$,
and in Figure~\ref{fig:number-infected-topology-wrt-topology},
where the average number of active infected individuals is depicted 
with respect to the coupling strengths $\sigma_S = \sigma_A$
for a couple of remarkable topologies.
\begin{figure}[ht!]
\centering
\includegraphics[scale=0.8]{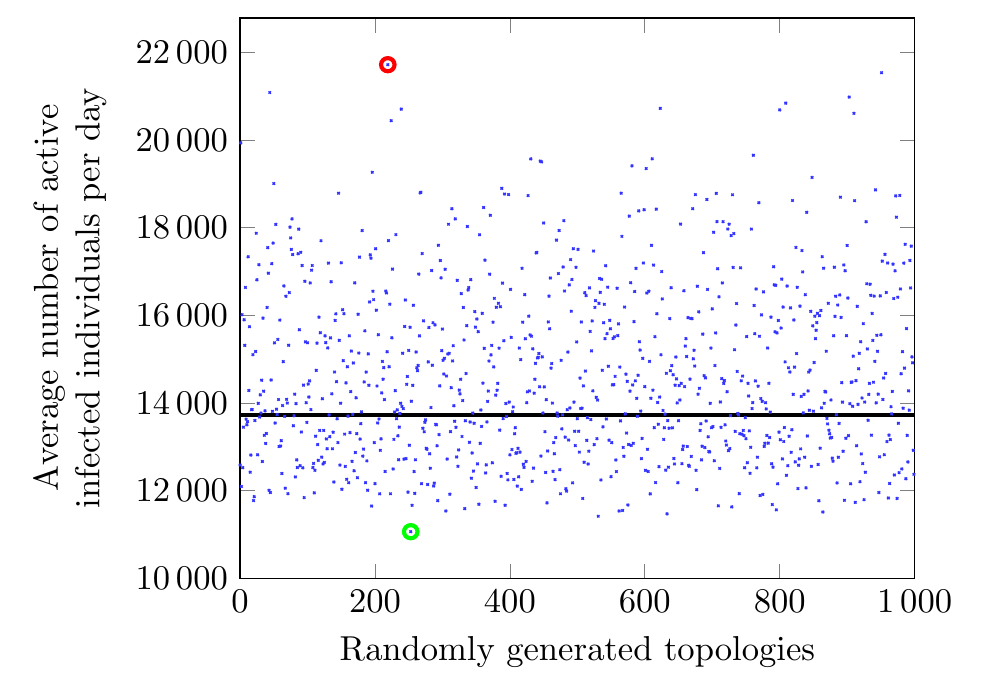}
\caption{Average number of active infected individuals, per day, f
or a sample of $1000$ randomly generated topologies, for $\sigma_S = \sigma_A = 0.01$.
The black line shows the level of infection for the empty topology.
The green circle shows the optimum topology which minimizes the level of infection,
whereas the red circle shows the topology which leads to the highest level of infection.}
\label{fig:sample-random-topologies}
\end{figure}
\begin{figure}[ht!]
\centering
\includegraphics[scale=0.57]{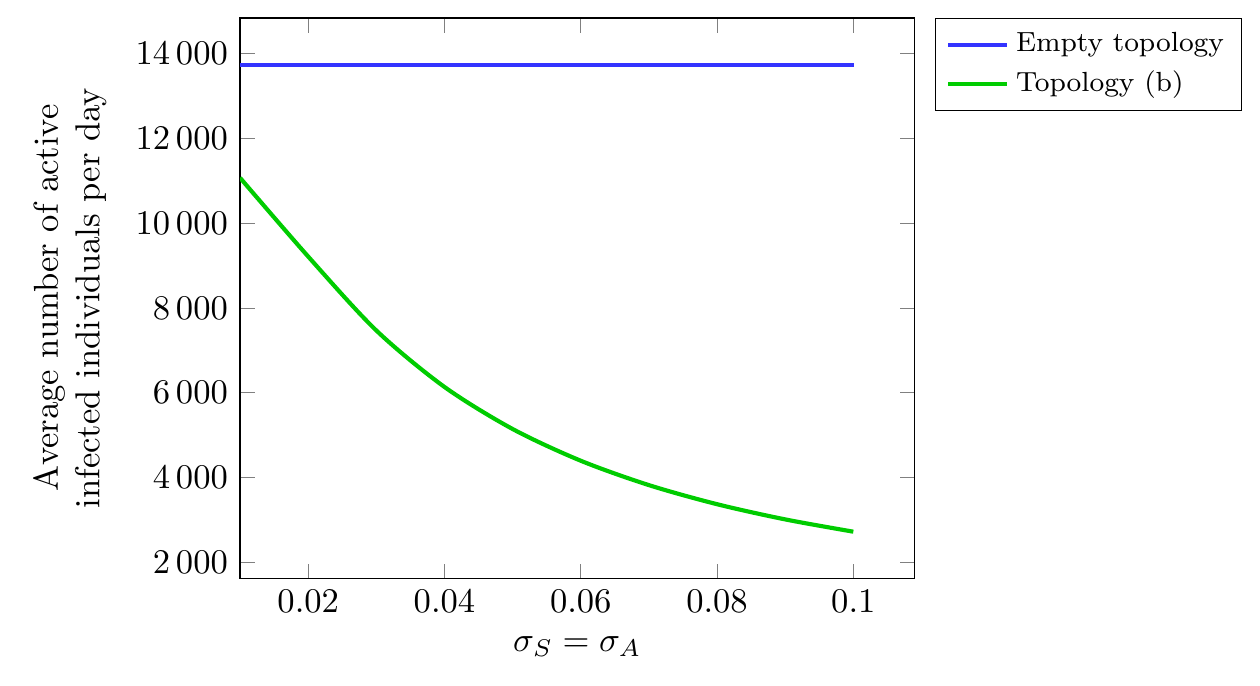}
\includegraphics[scale=0.57]{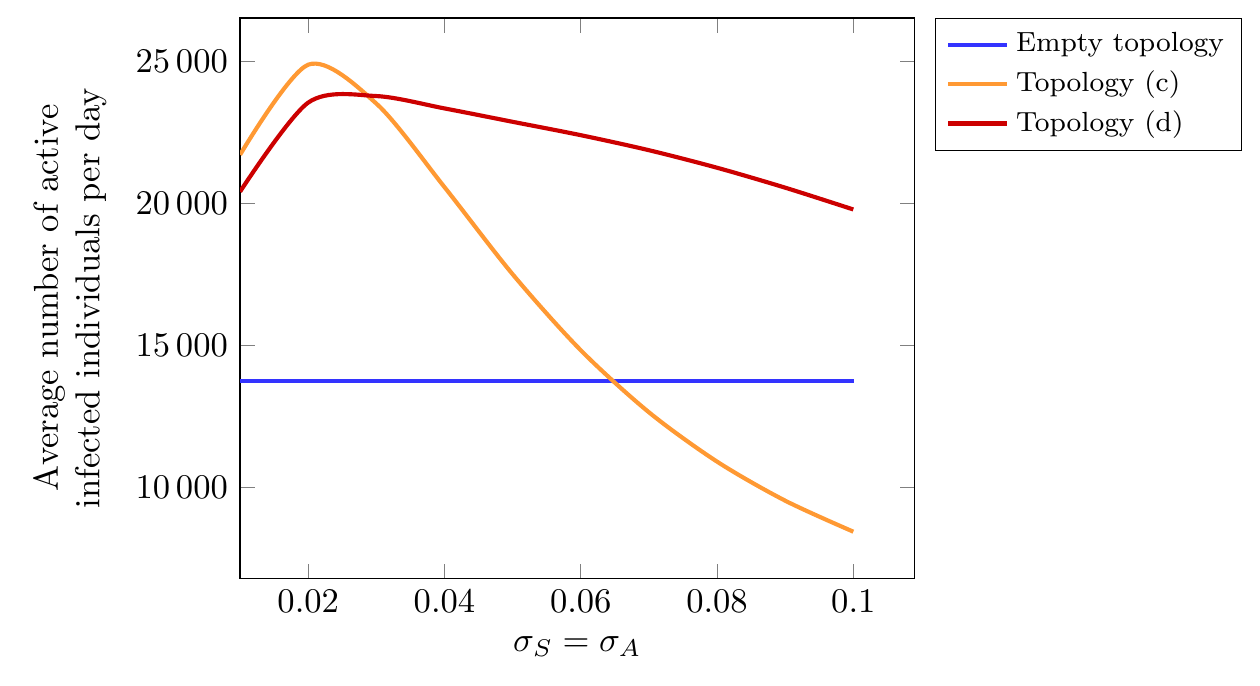}
\caption{Average number of active infected individuals, for each topology, 
with respect to the coupling strengths $\sigma_S = \sigma_A$.
Left: optimum topology which minimizes the level of infection of the epidemics.
Right: two examples of topologies that can increase the level of infection, 
compared to the empty topology, which corresponds to the situation where individuals 
do not migrate from one region to another. Topology (c) leads to a level of infection 
that overcomes the level of the empty topology for only a weak coupling strength,
whereas topology (d) seems to permanently overcome the level of the empty topology.}
\label{fig:number-infected-topology-wrt-topology}
\end{figure}
The numerical results reveal the existence of a certain number of topologies 
that decrease the level of infection, compared to the empty topology, 
which corresponds to the situation where individuals do not migrate from one region to another.
Among those topologies, one minimizes the level of cumulated infected individuals.
This optimal topology is depicted in Figure~\ref{fig:four-topologies}~(b). 
It is worth noting that this optimal topology connects the regions 
\emph{Norte}, \emph{Centro}, \emph{Lisboa e Vale do Tejo} and \emph{Pinhal Litoral}, 
but it does not connect those four regions to \emph{Alentejo} and \emph{Algarve}.
In parallel, the numerical simulations exhibit numerous topologies that increase 
the level of infection. We have presented in Figure~\ref{fig:four-topologies}~(c) 
and (d) two examples of such topologies. The corresponding levels of infection 
are presented in Figure~\ref{fig:number-infected-topology-wrt-topology}.
Topology (c) leads to a level of infection that overcomes the level of the empty topology 
for only a weak coupling strength, whereas topology (d) seems to permanently overcome 
the level of the empty topology. We remark that the topologies that increase the level 
of infection seem to connect the 6 regions of Portugal to the region \emph{Alentejo}. 
On the other hand, it can be helpful to authorize parsimonious migrations between 
regions that have a satisfying control of the epidemic.
\begin{figure}[ht]
\centering
\includegraphics[scale=0.55]{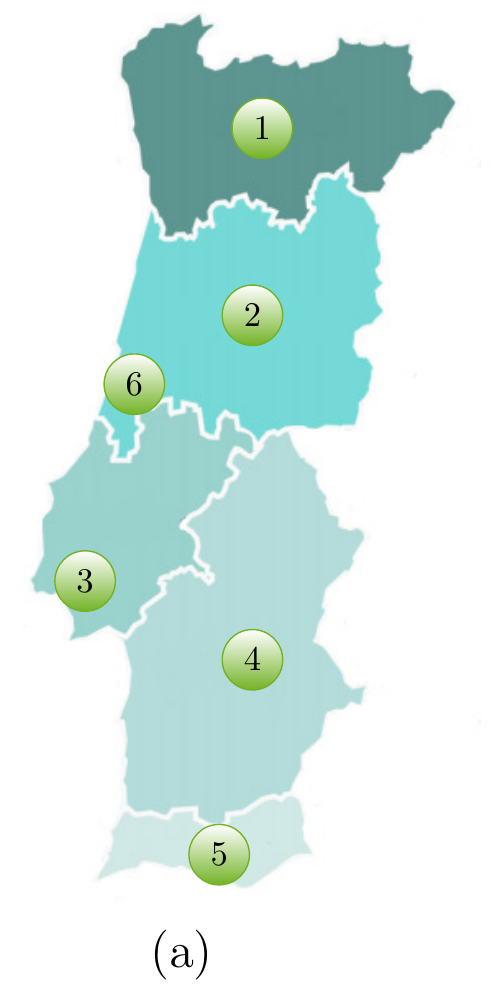}
\includegraphics[scale=0.55]{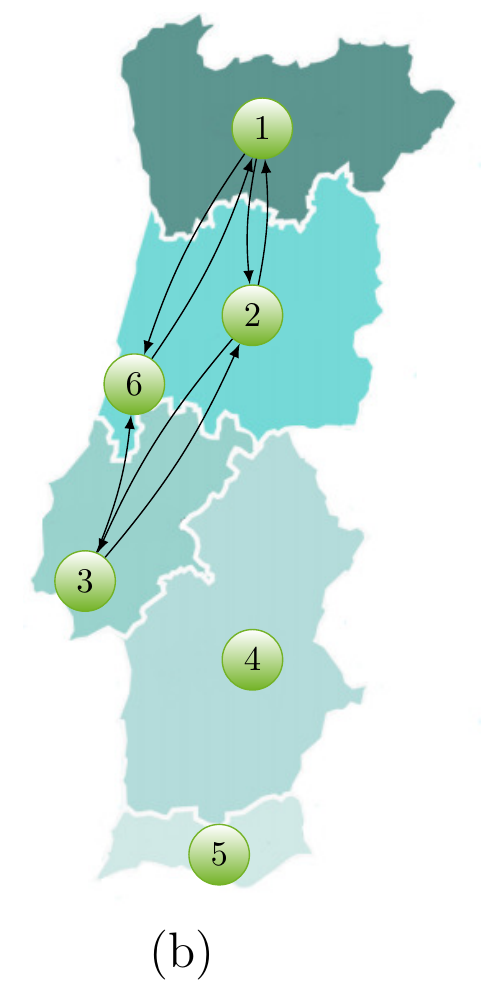}
\includegraphics[scale=0.55]{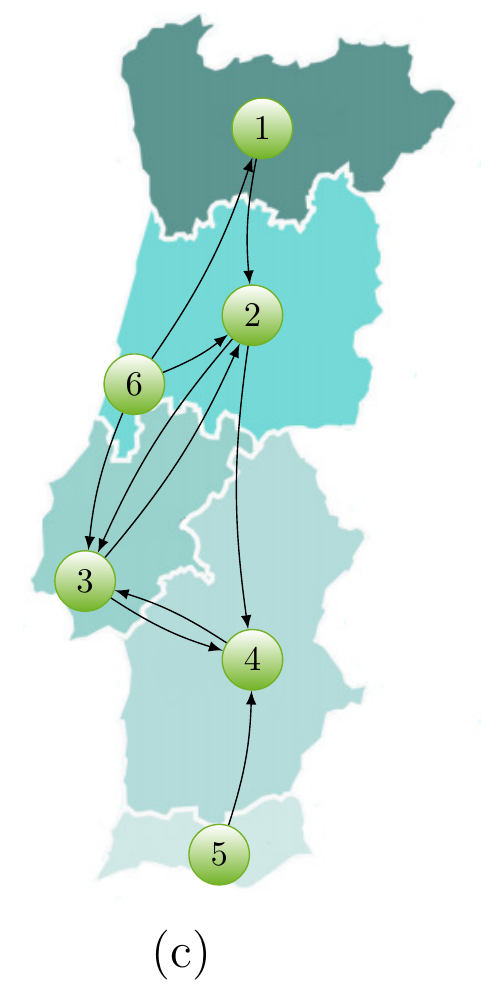}
\includegraphics[scale=0.55]{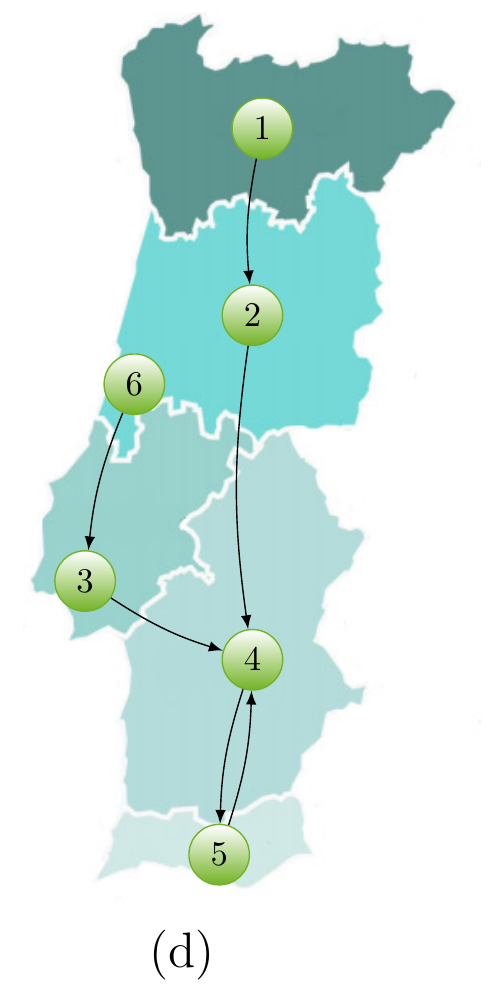}
\caption{Four remarkable topologies.
(a) Empty topology, which corresponds to the situation 
where individuals do not migrate from one region to another.
(b) Topology that minimizes the level of infection.
(c) Topology that leads to a level of infection greater than the level 
of the empty topology for only a weak coupling strength.
(d) Topology that permanently overcomes the level of the empty topology.}
\label{fig:four-topologies}
\end{figure}


\section{Discussion}
\label{sec:discussion}

One of the factors associated with the spread of the pandemic is the mobility of populations. 
It was based on this evidence that, in the first phase of the pandemic response, a generalized 
containment of the population was implemented in a big number of countries. As a result of this 
restriction in the mobility, it was possible to obtain a decrease in the number of infected.
In the numerical simulations considered in this work, for the complex network problem 
\eqref{eq:SAIRP-network}, it is assumed that only susceptible ($S$) and asymptomatic infected ($A$) 
are circulating among the 6 identified regions of mainland Portugal. Starting from the number 
of active infected people ($I$) for the empty topology, in which it is assumed that there are 
no flows between regions, the results of the simulations indicate that there are more topologies 
that increase the number of active infected people than those that cause this number to decrease. 
Hence, this result reinforces the need to reduce circulation between regions as one of the tools 
to reduce the spread of the virus and decrease the number of active infected people. Still, 
it is interesting to note that there are other topologies (around 1 out of 3) in which the number 
of active infected individuals decreases with the movement of people between regions. 
This finding can be seen in Figure~\ref{fig:sample-random-topologies}, where the average number 
of active infected individuals for the 1000 tested topologies is shown. According to the topology 
that minimizes the number of active infected individuals, whose representation is shown 
in Figure~\ref{fig:four-topologies}~b), it is possible to verify that the strategy of preventing 
the flow to the regions of \emph{Alentejo} (4) and \emph{Algarve} (5) contributes to reduce 
the number of active infected. This scenario was to be expected as those regions have a much 
lower number of cases than the rest of Portugal. Bidirectional circulation between 
\emph{Norte} (1) and \emph{Centro} (2) regions leads to a reduction in the number 
of active infected people (if one compares with the empty topology), but the movement between 
\emph{Pinhal Litoral} region (6) and \emph{Lisboa e Vale do Tejo} region (3) should 
only be done in the direction to prevent circulation from the region where there are less 
infected (6) to the region (3) that has a high number of active cases. 
In Figure~\ref{fig:four-topologies}~c) it is possible to verify that the existence 
of this connection in the direction (6) to (3) increases the number of active cases.
The topologies involving the regions of \emph{Alentejo} (4) and \emph{Algarve} (5) 
increase the number of cases. One extreme example of such topology can be found in 
Figure~\ref{fig:four-topologies}~d). In this case, the direction of the flows presents 
a topology that converges to the \emph{Alentejo} (4) and it is expected that this 
will translate into an increase in the number of infected (even in the case where 
the connection strength is small, as shown in Figure~\ref{fig:four-topologies}~c) 
and Figure~\ref{fig:number-infected-topology-wrt-topology} at right), as it is the 
region where the number of cases is lesser. The way the epidemic evolves is different 
between the different regions of a country, but also between countries. These specific 
characteristics of context (cultural, demographic, economic, \ldots) seen in the number 
of active infected individuals (which is not uniform across regions), can be modeled 
by pseudo-periodic piecewise functions, as proved in this work. Accordingly, the 
application of such methodology to identify flows between regions or countries 
is a tool with enormous potential in the current pandemic context, and can be applied 
in the management of outbreaks (in regional terms) but also to manage 
the opening/closing of borders.


\section{Conclusion and future work}
\label{sec:conclusion}

The way the epidemic evolves is different between the different regions of a country, 
but also between countries. These specific characteristics of context (cultural, 
demographic, economic, \ldots) seen in the number of active infected individuals, 
which is not uniform across regions, can be modeled by pseudo-periodic piecewise functions, 
as proved in this work. Accordingly, the application of such methodology to identify flows 
between regions or countries is a tool with enormous potential in the current pandemic context, 
and can be applied in the management of outbreaks (in regional terms) but also to manage 
the opening/closing of borders. In the implemented simulations, the intensity of the flows 
($\sigma_S$ and $\sigma_A$) are assumed to be equal. Still, and as future work, it would 
be useful, in terms of epidemiological management, to be able to model the intensity 
of flows by class (Asymptomatic, Susceptible, Active Infected, Recovered, and Protected) 
between regions.


\section*{Acknowledgments} 

This research is partially supported by the Portuguese Foundation for Science and Technology (FCT) 
within ``Project Nr.~147 -- Controlo \'Otimo e Modela\c{c}\~ao Matem\'atica da Pandemia \text{COVID-19}: 
contributos para uma estrat\'egia sist\'emica de interven\c{c}\~ao em sa\'ude na comunidade'', 
in the scope of the ``RESEARCH 4 COVID-19'' call financed by FCT, and by project UIDB/04106/2020 (CIDMA).
Silva is also supported by national funds (OE), through FCT, I.P., in the scope of the framework contract 
foreseen in the numbers 4, 5 and 6 of the article 23, of the Decree-Law 57/2016, of August 29, 
changed by Law 57/2017, of July 19. Rui Fonseca-Pinto and Rui Passadouro are supported by the FCT 
projects UIDB/05704/2020 and UIDP/05704/2020.



\end{document}